%% file: main.tex
\documentclass[acmtog]{acmart}

\AtBeginDocument{%
  }

\setcopyright{none}
\acmYear{}
\copyrightyear{}
\acmVolume{}
\acmNumber{}
\acmArticle{}
\acmMonth{}
\acmDOI{}
\acmISBN{}

\usepackage{xcolor}
\usepackage{caption}
\usepackage{booktabs}
\usepackage{multirow}
\usepackage{bm}
\usepackage{enumitem}
\usepackage{dblfloatfix}
\input{macros}

\makeatletter
\renewcommand{\@formatdoi}[1]{}
\def\@acmBadgeR{}\def\@acmBadgeL{}
\renewcommand\footnotetextcopyrightpermission[1]{}
\makeatother
\begin{document}

\title{Intrinsic and Triangulation-Agnostic Attention: A  Simple and Powerful Approach for Learning on Meshes}

\author{Ashwath Shetty}
\affiliation{%
  \institution{ Université de Montréal \& Mila}
  \city{City}
  \country{Canada}}
\email{ashwath.shetty@umontreal.ca}
\author{Zihan Zhu}
\affiliation{%
  \institution{Université de Montréal \& Mila}
  \country{Canada}}
\email{zihan.zhu@umontreal.ca}
\author{Soren Pirk}
\affiliation{%
  \institution{Kiel University}
  \country{Germany}}
\email{soeren.pirk@gmail.com}

\author{Noam Aigerman}
\affiliation{%
  \institution{Université de Montréal \& Mila}
  \country{Canada}}
\email{noam.aigerman@umontreal.ca}

\input{sections/0_abstract}

\begin{CCSXML}
<ccs2012>
<concept>
<concept_id>10010147.10010371.10010396.10010402</concept_id>
<concept_desc>Computing methodologies~Shape analysis</concept_desc>
<concept_significance>500</concept_significance>
</concept>
</ccs2012>
\end{CCSXML}

\ccsdesc[500]{Computing methodologies~Shape analysis}

\keywords{Geometry Processing, Machine Learning}

\include{figures/teaser}

\maketitle
\thispagestyle{empty}
\pagestyle{plain}
\input{sections/1_intro.tex}
\input{sections/2_RW}
\input{sections/3_method}
\input{sections/5_results}
\input{sections/conclusion}

\bibliographystyle{ACM-Reference-Format}
\bibliography{references}
\clearpage

\input{sections/7_figures}

\clearpage
\appendix
\input{sections/supp}


\end{document}

%% file: macros.tex
\definecolor{seabornblue}{HTML}{EAEAF2}
\colorlet{seabornblue25}{seabornblue!25}

\newcommand{\manifold}{\Omega}
\newcommand{\mesh}{\mathcal{M}}

\newcommand{\mass}{\textbf{M}}

\newcommand{\real}{\mathbb{R}}

\newcommand{\tribrac}[1]{\left<#1\right>}

\newcommand{\parr}[1]{\left(#1\right)}
\newcommand{\set}[1]{\left\{#1\right\}}
\newcommand{\bigo}[1]{\mathcal{O}\parr{#1}}
\newcommand{\norm}[1]{\left\|#1\right\|}
\newcommand{\mypar}[1]{%
\par
  \vspace{1ex}          
  \noindent\textbf{#1}  
}

\newcommand{\dfeature}{\textbf{f}}
\newcommand{\dquery}{\textbf{q}}
\newcommand{\dkey}{\textbf{k}}
\newcommand{\dvalue}{\textbf{v}}
\newcommand{\dweight}{\textbf{w}}
\newcommand{\datt}{\mathbf{\alpha}}
\newcommand{\spoint}{x}
\newcommand{\tpoint}{y}
\newcommand{\sset}{\mathcal{X}}
\newcommand{\tset}{\mathcal{Y}}

\newcommand{\cfeature}{{f}}
\newcommand{\cquery}{{q}}
\newcommand{\ckey}{{k}}
\newcommand{\cvalue}{{v}}
\newcommand{\cweight}{{w}}
\newcommand{\catt}{{\alpha}}

%% file: sections/0_abstract.tex
\begin{abstract}
  This work proposes an adaptation of the attention mechanism for triangle meshes. The core observation is that endowing the attention mechanism with critical properties for learning over meshes -- intrinsicality and triangulation-agnosticism -- enables it to attain state-of-the-art results over several learning-based tasks in geometry-processing. The above is achieved by modifying the attention mechanism from the bottom up based on simple principles from geometry-processing. Namely, the quantities used within attention --  queries, keys and values -- are created by an intrinsic, triangulation-agnostic network, and treated as discretizations of continuous functions. From that, we devise an appropriate attention mechanism that operates over triangle meshes through standard FEM discretization of the resulting integrals of the above functions. Surprisingly, as far as we know, this straightforward approach has not been utilized for learning over meshes. Experiments show our method exceeds current state of the art, including both mesh-based architectures as well as point cloud transformers. Namely, we show significant improvements on several common benchmarks and tasks -- predicting canonical high-frequency signals; predicting deformations;  computing dense correspondences, both between full shapes and partial ones; and predicting feature descriptors. 
\end{abstract}

%% file: figures/teaser.tex
\begin{teaserfigure}
\centering
  \includegraphics[width=\textwidth]{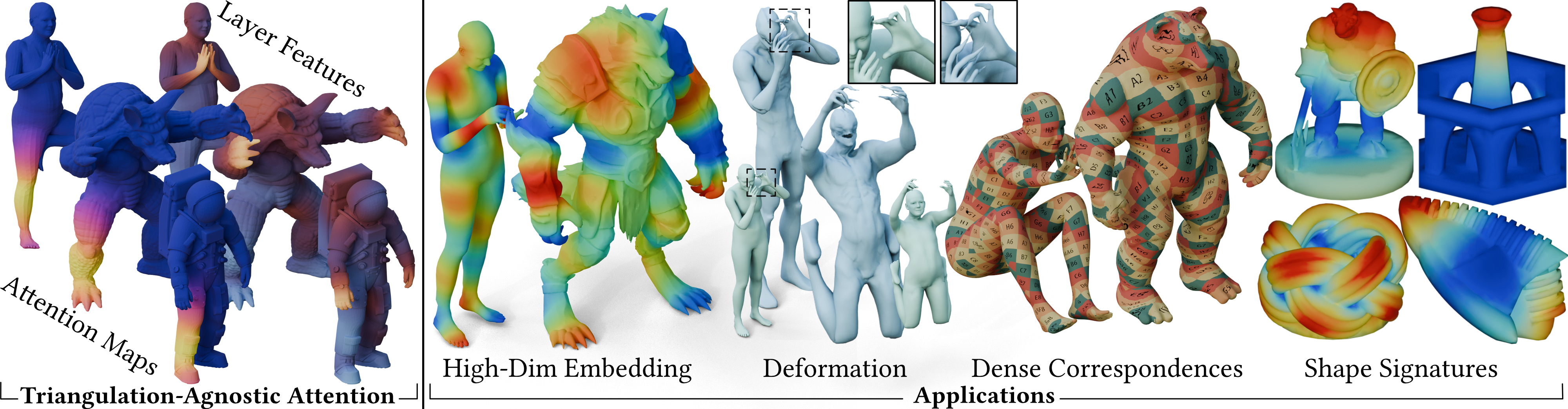}
  \caption{We propose an attention layer designed specifically for meshes, so that it is intrinsic and discretization agnostic. Using these layers we attain state-of-the-art results in several applications: predicting Fourier-like high dimensional embeddings of vertices; deforming characters with a level of granularity unattained before (note fingers); computing mesh-to-mesh dense correspondences ; and, predicting shape signatures. }
  \label{fig:teaser}
\end{teaserfigure}

%% file: sections/1_intro.tex
\section{Introduction}

The attention mechanism~\cite{attention_original}  has become one of the most popular techniques for practical machine learning, as it provides a powerful learning-based method for propagating and aggregating features between different elements; this in turn has been shown to consistently lead to a significant gain in performance compared to, e.g., multi-layer perceptrons (MLPs) or convolutional neural networks (CNNs).  As such, the attention-based transformer architecture has arguably become the default backbone of state-of-the-art machine learning pipelines for text~\cite{gpt}, audio~\cite{verma2021audio}, images~\cite{vit}, and voxels~\cite{VoxFormer}. 

As of now, attempts to apply transformers for learning over \emph{triangle meshes} have not led to the same significant gain in performance that has been observed in other domains. Indeed, state-of-the-art transformer-based architectures for 3D learning usually opt to operate over \emph{point clouds}, discarding the mesh structure (and thus metric and topological information), which we observe leads to under-performance.

On the other hand,  current state-of-the-art  \textit{mesh-based} learning architectures do not utilize an attention mechanism, but rather other, more geometric approaches, e.g.,  a simple multi-layer perceptron coupled with operations based on linear finite elements (FEM), solving a partial differential equation for signal propagation~\cite{diffusionnet,poissonet}. The reason for the success of these geometric networks, in comparison to existing attention-based methods, lies in that they are \textit{intrinsic} (i.e., account for the Riemannian manifold structure of the mesh, and its induced metric) and \textit{triangulation agnostic}, meaning they appropriately treat the signals computed over the mesh as representing a discretization of a \emph{continuous} signal over the continuous manifold the mesh approximates. This, in turn, enables the architecture to leverage the metric and topological information of the mesh, learn and operate over datasets that have a different triangulation for each mesh in the dataset, and, as importantly, perform more efficient learning, similar to the gain of equivariant networks.

Our core contribution stems from the observation that, somewhat unsurprisingly, the above properties of intrinsicality and triangulation-agnosticism are exactly the qualities lacking from the standard attention mechanism, in order for it to provide the desired boost in performance for mesh-based learning. 

Concretely, we revise the attention mechanism from the bottom up, so that it operates in an intrinsic, triangulation-agnostic manner: we treat the usual quantities used in attention (\textit{queries}, \textit{keys}, \textit{values}) as representing discretizations of continuous signals, and compute these via an appropriately intrinsic triangulation-agnostic, neural mesh-based architecture. We then devise appropriate triangulation-agnostic attention operations: for \textit{self} attention between the mesh vertices and themselves, we follow the straightforward analogue of the attention operation for a continuous function, for which we choose a standard FEM-based discretization of a mass-weighted quadrature; this specific choice in turn enables the utilization of existing attention variants~\cite{xformer} that are significantly more computationally efficient and enable us to operate on the same datasets and meshes as previous works. On the other hand, for \textit{cross} attention between mesh vertices and a non mesh-based signal, we show that if the query signal is produced in an  intrinsic triangulation-agnostic manner, then the result from the standard cross-attention mechanism is intrinsic triangulation-agnostic by construction. We provide a theoretical proof that these two discretizations well-approximate the continuous attention definition, with a linearly-bounded error. Figure~\ref{fig:teaser}, left shows attention maps and features computed from our attention mechanism, exhibiting similar structure across several meshes with significantly different geometries and triangulations.

 Surprisingly, in spite of the simplicity of the above approach, we are unaware of any previous work that has proposed it. In fact, as part of our experiments we show that an implementation of this technique could have been achieved using only the existing state of the art as it were \emph{four years} ago (namely, using DiffusionNet~\cite{diffusionnet} as a backbone), and would have still led to a boost in performance that exceeds \emph{current} state of the art (see Figure~\ref{fig:5_single_source_qual}). 

Empirically, we show that our proposed attention layers provide a significantly more powerful 3D learning framework, in comparison to several baselines:  on one hand, we show that our method exceeds the performance of state-of-the-art \textit{point cloud} transformers, due to the attention mechanism being intrinsically aware; on the other hand, we show a significant improvement compared to state-of-the-art \textit{mesh-based} architectures, due to the triangulation-agnosticism of the attention layers. Specifically, we show better results on several applications and benchmark tasks (exhibited in Figure~\ref{fig:teaser}):
\begin{enumerate}
[leftmargin=0.8cm]
    \item \textbf{Predicting intrinsic high-frequency features.} We show a significant improvement in accuracy (6dB in PSNR) compared to the state of the art when predicting Fourier-like features.  
    \item \textbf{Deforming 3D models.} We show a categorical improvement in accuracy and granularity in the prediction of deformations of humans, e.g., achieving articulated finger control (see zoom-in in Figure~\ref{fig:teaser}).
    \item \textbf{Predicting dense correspondences between 3D shapes.} We show that a rather naïve application of our attention mechanism outperforms the current state of the art, and provides highly accurate dense correspondences both for full, as well as partial matching.
    \item \textbf{Predicting surface descriptors.}  We show our method can more accurately predict surface descriptors, e.g., the Heat Kernel Signature~\cite{heatKernelSignature}, than current state of the art. 
\end{enumerate}

%% file: sections/2_RW.tex
\section{Related Work}
\mypar{Attention} The attention mechanism~\cite{attention_original} has seen tremendous success in structured modalities such as vision~\cite{vit}, language~\cite{gpt}, and audio~\cite{verma2021audio}. 
Self attention (i.e., attention between the input tokens and themselves), is often used in learning representations~\cite{dino,dinov3,clip}, while cross attention (i.e., between input tokens and external tokens) serves as an effective way of combining information across modalities~\cite{LDM,multimodal2}. To improve efficiency recent approaches~\cite{flashattention,xformer} avoid the quadratic memory cost associated with standard softmax-weighted attention, by computing the softmax online via highly optimized CUDA kernels. Others propose softmax alternatives  such as linearized attention~\cite{LinearAttention} and state-space models~\cite{mamba}; however, most current methods still rely on softmax weighting~\cite{Qwen,jamba}. We design our mesh-based softmax attention specifically so it can leverage these advances. 

\mypar{Attention in 3D.}  Many works have applied attention for 3D learning problems, e.g., by treating 3D points sequentially~\cite{PTV3,PTV2,PT,litept}, which discards the intrinsic structure. Structured voxels~\cite{meshformer, VoxFormer}, are proven to be a powerful technique, however limited in its ability to operate over intricate geometry which may occupy a small part of space. 
Previous attempts to apply attention on \textit{meshes} either  assume a fixed mesh triangulation~\cite{metro1,METRO2}, devise mesh-based tokenization strategies~\cite{meshmae,meshmamba}  or inject mesh-based encoding by utilizing the eigenvectors of the Laplacian, either as positional encoding~\cite{meshgraphtransformer,laplacianmeshtransformer}, or as tokenization~\cite{recipe}. All these methods are not triangulation agnostic and hence cannot operate in the settings and applications shown in this paper.

The method most similar to ours is HodgeFormer~\cite{hodgeformer}; it proposes to use discrete exterior calculus (DEC) for representing mesh-based attention by learning the intrinsic mesh-based DEC operators themselves end-to-end. These DEC operators are learned solely based on the mesh's discrete connectivity and hence are not triangulation agnostic.  We compare to HodgeFormer, and show our method provides more accurate predictions in Section~\ref{sec:pose-encoding}. Namely, we use the standard discretization of intrinsic differential operators, which provides our framework with  triangulation-agnosticism. 

\mypar{Continuous attention and neural operators.} Many works treat attention through the lens of continuous operators and operator learning~\cite{neuraloperator,galerkin,Transolve}. Galerkin Transformer~\cite{galerkin} draws an explicit connection between softmax-free linear attention and Galerkin-type projections. This has motivated several attention-based neural operators~\cite{neuraloperator,Transolve}, that try to learn the mapping between functions end-to-end (for, e.g., solutions to differential equations).
Our goal is different from these approaches, which try to learn functional operators -- we use the standard, fixed discretization of the mesh-based differential operators, and define a simple, triangulation-agnostic attention mechanism through it. 
\input{figures/disc-agnostic}
\mypar{Learning on meshes.}  Many methods for learning over meshes have been proposed, such as extending convolutions to surfaces~\cite{boscaini2016learning, fey2018splinecnn, monti2017geometric,bronstein2017geometric, dynamic_simonovsky_2017, geodesicConvo_masci_2015,meshcnn_hanocka_2019,deltaconv}, leveraging equivariance~\cite{de2020gauge, he2020curvanet, fieldconvo_mitchel_2021, mdgcnn_poulenard_2018, HSN_wiersma_2020, cgconv_Yang_2021, sun2020zernet}, or treating the mesh as a graph ~\cite{dynamic_simonovsky_2017,meshwalker_lahav_2020}. Others use a Fourier-like decomposition using the eigenvectors of the Laplacian~\cite{Litany_2017_ICCV,Yi_2017_CVPR, halimi2019unsupervised, roufosse2019unsupervised, donati2020deep, attaiki2021dpfm,hodgenet_smirnov_2021}; this, however limits their granularity to that of the truncated eigenbasis, and prevents scaling to larger datasets. For the backbone of our architecture, we utilize two methods considered state-of-the-art for mesh-based learning: PoissonNet~\cite{poissonet} and DiffusionNet~\cite{diffusionnet}. They propagate features over the domain via FEM solutions to PDEs, thus are agnostic to discretizations. On their own, their feature propagation is limited to follow the PDE's solution, and hence much weaker than our attention-based propagation. Furthermore, their conditioning mechanism is limited to standard concatenation of the conditioning signal as an additional input to the backbone MLP; we show (e.g., in the deformation experiment, Section~\ref{sec:shape_deform}) that this is a much weaker conditioning method than applying cross attention between the mesh and the conditioning signal.

\mypar{Learning deformations.}
Computation of deformations is a key task in computer graphics and geometry processing~\cite{sumner2004deformation,bogo2014faust, gao2019sdm,Lipman04,Sorkine04,sumner2004deformation,Yu04}, often achieved via rigs~\cite{skinningcourse:2014,Fulton:LSD:2018,jacobson2011bounded,kavan2008geometric,lipman2008green,ju2005mean}, which several methods leverage using neural networks~\cite{AnimSkelVolNet,RigNet,Holden:inverse_rig:2015,liu2025riganything,wang2019neural,sun2024tuttenet,li2021learning}, often combining them with prediction of nonlinear residuals~\cite{Bailey:2018:FDD,bailey2020fast,Romero:2021,Zheng:secondary_motion:2021,yin2021_3DStyleNet}. Alternatively if the source is a fixed template~\cite{bogo2014faust,varol17_surreal,SMAL:2017,STAR:2020}, one can regress the vertex coordinates directly~\cite{anguelov2005scape,Bogo:ECCV:2016,Shen:2021}. Methods often focus on preserving local details by operating in the gradient space~\cite{NJF,temporalNJF,Gao23,Kim25,Yoo24}.  The current state-of-the-art approach for predicting deformations, PoissonNet~\cite{poissonet}, operates in the gradient domain in each of its layers, leading to highly-accurate predictions. We compare to it and show our attention-based framework attains significantly more accurate and granular predictions of deformations.

\mypar{Shape correspondence.} Computing dense correspondences between 3D shapes is a fundamental task in geometry processing~\cite{RL01,ZSN03,MAKM21}, see \cite{reviewnonrigid} for a more detailed overview. Commonly tackled via handcrafted features and optimization~\cite{heatKernelSignature,sadh04,apl14,SBCK19,3dcoded},  more recent approaches shifted to learned priors~\cite{AAWEOW,MAKM21,neural_semantic_surface_maps}. A common approach is to use  functional maps~\cite{ovsjanikov2012functional} which has many followups using  energy formulations~\cite{ren2018continuous,panine2022non,smoothshells}, utilizing pointwise maps~\cite{zoomout}, or using learned features~\cite{attaiki2021dpfm,Litany2017DeepFM,simplified_fmaps}. Most recently, Diffumatch~\cite{diffumatch} learned a diffusion prior in the spectral domain to attain state-of-the-art results.  We compare to them in Section~\ref{sec:correspondance} and attain higher accuracy via our attention layers.

%% file: figures/disc-agnostic.tex
\begin{figure}[t]
\centering
    \includegraphics[width=1\linewidth]{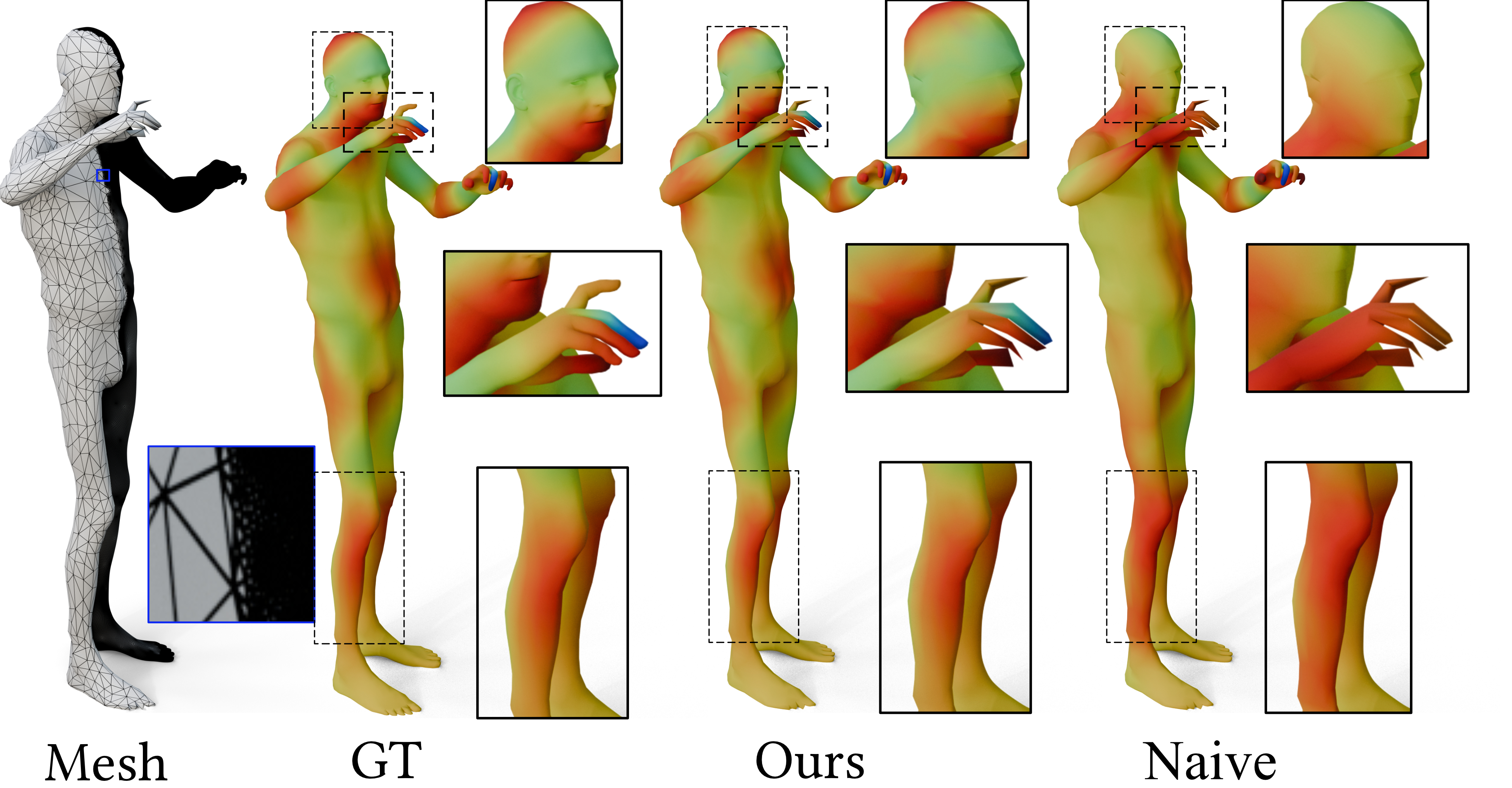}
    \caption{
   \textbf{Triangulation-agnostic vs. naive attention.} When faced with a mesh with different triangulation than that of the training set, \emph{our} triangulation-agnostic attention still predicts signals that match the ground truth; \textit{naive} attention predicts incorrect signals.
    %
 }
    %
    
    %
    \label{fig:3_disc_agnotism_sa}
\end{figure}

%% file: sections/3_method.tex

\section{Method}
\subsection{Preliminaries}
\label{sec:prelim}
We assume to be  given a mesh $\mesh$ with vertices $\sset=\set{\spoint_i}_{i=1}^n$. We further assume the mesh discretizes a continuous surface, $\mesh\sim\manifold$.
\mypar{Triangulation agnosticism.}
 We consider functions defined over the mesh, discretized by assigning a value $\textbf{F}_i$ to each vertex $\spoint_i$ of the mesh. The discrete function $\textbf{F}$ is said to be  \textit{triangulation agnostic } if it is generated in a way that entails it approximates an (unknown) continuous function $F(\spoint)$ defined over the manifold, 
 \begin{equation}
 \label{eq:tri_agnostic}
     \mathbf{F}_i\approx F\parr{\spoint_i}. 
 \end{equation}
 If the above equation holds, then computing this function on two different triangulations $\mesh^A,\mesh^B$ of the surface would yield two functions $\textbf{F}^A,\textbf{F}^B$ which are near-identical, $\textbf{F}^A\approx F \approx\textbf{F}^B$.

\mypar{The attention mechanism.} 
We next describe the standard attention mechanism as it is commonly used. For coherence with our mesh-based construction, we describe the mechanism as operating over \textit{point sets}. Specifically, attention operates on a given point $\spoint$ and another set of points $\tset=\set{\tpoint_1,...,\tpoint_m}$. Each point in $\tset$ has a corresponding vector of features called \emph{values}, $\dvalue_i\in\real^{|F|}$, $V=\set{\dvalue_1,...,\dvalue_m}$.  

The goal of the attention mechanism is to aggregate these features  defined on $\tset$ into a feature on $\spoint$. Namely, $\spoint$ has a vector associated with it, called the 
\emph{query} $\dquery\in\real^r$. 
Each point $\tpoint_i$ has a similar vector called a 
\emph{key} $\dkey_i\in\real^r$. The 
weighting that $\spoint$ places on the features of $\tpoint_j$ is defined by taking the inner product between the query and the corresponding key and exponentiating it:
\begin{equation}
    \label{eq:dweight}
    \dweight_{j}= e^{\tribrac{\dquery,\dkey_j}}.
\end{equation}

Each unnormalized weight is finally transformed into the \emph{attention weight} $\alpha_{j}$ by a global normalization that requires that all attention weights sum to $1$:

\begin{equation}
    \label{eq:datt}
    \datt_{j}=\frac{\dweight_{j}}{\sum_k \dweight_{k}}.
\end{equation}
This defines the attention weight as a \emph{softmax} function of the inner product of the query with the key.
The final aggregated feature $\dfeature$ of point $\spoint$ is defined by the weighted sum
\begin{equation}
\label{eq:dweighted_sum}
    \dfeature= \sum_j \datt_{j}\dvalue_j=\frac{\sum_j e^{\tribrac{\dquery,\dkey_j}}\dvalue_j}{\sum_k e^{\tribrac{\dquery,\dkey_k}}}.
\end{equation}

\subsection{Problem Statement}
 We now move on to our main goal: defining intrinsic, triangulation-agnostic attention over meshes. 
 
 To apply attention to a mesh, the attention mechanism is applied to each mesh vertex  in parallel. Namely,  each vertex $\spoint_i$ is assigned with a query vector $\dquery_i$;  using it, the attention mechanism outputs the aggregated feature vector  $\dfeature_i$ for that vertex. 
 
 As our goal is to ensure intrinsicality and triangulation agnosticism, we require the features satisfy Equation~\eqref{eq:tri_agnostic} and represent sampled values of a continuous function $\cfeature$ defined over the underlying surface, i.e., $\dfeature_i \approx\cfeature(\spoint_i)$; thereby we will ensure the output yields similar $\dfeature_i$ values for different triangulations.
 

To achieve such an intrinsic, triangulation-agnostic attention output, we need to ensure that in turn every step and quantity used \emph{within} the attention mechanism is computed in an intrinsic, triangulation agnostic manner. Practically, this entails two requirements: \begin{enumerate}
    \item The queries $\dquery_i$ themselves should be intrinsic, triangulation-agnostic and represent a continuous function $\cquery(\spoint)$.
    \label{item:q_agnostic}
    \item The attention mechanism should operate in  a triangulation-agnostic manner over the queries.
    \label{item:a_agnostic}
\end{enumerate}


 It is straightforward to satisfy Item~\eqref{item:q_agnostic} and achieve triangulation-agnosticism for the queries: we note that $\dquery$ can simply be predicted by an appropriate intrinsic, triangulation-agnostic network such as PoissonNet~\cite{poissonet} or DiffusionNet~\cite{diffusionnet}, whose output is triangulation agnostic (as long as the input is as well). Henceforth, we assume the query vector approximates a continuous function $\dquery_i\approx\cquery\parr{\spoint_i}$.

 We next turn to tackle Item~\ref{item:a_agnostic} and provide a triangulation-agnostic attention mechanism. We tackle separately each of the two main attention types: \emph{cross attention} between a mesh and a discrete set of external parameters (Section~\ref{sec:cross_attn}), and \emph{self attention} between the mesh's vertices and themselves (Section~\ref{sec:self_attn}).

\subsection{Attention Between the Mesh and a Discrete Set}
\label{sec:cross_attn}
We first tackle \emph{cross-attention} between the mesh and another, discrete set. Namely, assume both the values $\dvalue_i$ to be aggregated and their corresponding keys $\dkey_i$ are still defined over some other, non mesh-based set $\tset$ as described in Section~\ref{sec:prelim}. In this case, we show triangulation agnosticism is attained using the standard, discrete attention mechanism, Equation~\eqref{eq:dweighted_sum}, as long as the queries  $\dquery_i$ are triangulation agnostic themselves. 

Indeed,  per triangulation-agnosticism of the queries, assume $\dquery$ well-approximates an underlying continuous function $\cquery(\spoint)$, $\dquery_i=\cquery(\spoint_i)+E$, with $\norm{E}=\varepsilon$ being the discretization error. Then, assuming the keys and values are bounded, it holds that the error between the computed feature $\dfeature_i$ and the ``true''  value of the continuous feature function $\cfeature(\spoint_i)$ depends solely on the approximation error between the computed query $\dquery_i$ and the ``true'' query function $\cquery(\spoint_i)$, and is linear in it: $\norm{\dfeature_i-\cfeature\parr{\spoint_i}}=\bigo{\varepsilon}$. We provide a straightforward formal proof in Appendix~\ref{sec:cross_attn_proof} in the supplementary.

\subsection{Attention Between the Mesh and Itself}
\label{sec:self_attn}
We now turn to defining \emph{self-attention} over the mesh, i.e., each vertex aggregates values from other vertices, and the keys $\dkey_i$ and features $\dfeature_i$ are also defined over the vertices $\spoint_i$ of the mesh; similarly to the queries, we compute these quantities via a triangulation-agnostic backbone, thereby treating them as approximating continuous functions, $\dkey\approx\ckey(\spoint), \dvalue\approx\cvalue(\spoint)$. We next define \emph{continuous} attention with respect to these continuous functions. We will then correctly discretize this operation, from an FEM perspective, leading to triangulation-agnostic output features, which we prove well-approximates the continuous values.

\mypar{Continuous, surface-based attention.} 
We reformulate the attention equations from Section~\ref{sec:prelim}, treating all quantities as continuous functions. Given the query and key functions $\cquery\parr{s}, \ckey\parr{t}$, the weight computation between two vertices (Equation~\eqref{eq:dweight}) becomes

\begin{equation}
    \label{eq:cweight}
    \cweight\parr{s,t}= e^{\tribrac{\cquery\parr{s},\ckey\parr{t}}}.
\end{equation}
The summation operations are now replaced with continuous integrals, namely the normalized attention weights (Equation~\eqref{eq:datt}) become
\begin{equation}
\label{eq:catt}
    \catt\parr{s,t}=\frac{\cweight\parr{s,t}}{\int  \cweight\parr{s,r}\text{d}r},
\end{equation}
and the weighted sum (Equation~\eqref{eq:dweighted_sum}) becomes a weighted integral
\begin{equation}
\label{eq:cweighted_sum}
    \cfeature\parr{s} = \int \catt\parr{s,t}\cvalue\parr{t} \text{d}t.
\end{equation}

\mypar{Discretizing the continuous attention.} As before, our final goal is to compute the aggregated features, in this case consisting of the evaluation of the continuous function $\cfeature$ over the vertices, $\cfeature\parr{\spoint_i}$. 

In order to compute this quantity, we need to compute the integrals in Equations~\eqref{eq:catt} and~\eqref{eq:cweighted_sum}. One discretization would be to take $\cquery,\ckey,\cvalue$ as piecewise-linear functions, interpolating the vertex values. In this case, the integrands are an exponent of a linear function, whose integral has a closed-form identity and thus can be computed exactly. However, based on our experiments this leads to expensive computation, and furthermore prevents integration into existing memory-efficient attention kernels \cite{xformer,flashattention}.

Hence, we instead opt to use the standard vertex-based lumped-mass weighting  quadrature approximation of the integrals, which is highly-efficient and furthermore, 
lends itself to standard attention formulation and thus enables us to leverage existing optimized CUDA kernels (see supplemental, Section~\ref{sec:architecture}). 
Specifically, for the denominator in Equation~\eqref{eq:catt} we get
\begin{equation}
    \int  \cweight\parr{\spoint_i,r}\text{d}r \approx\sum_j\mass_{jj} \dweight_{ij},
\end{equation}
where $\mass_{jj}$ is the lumped mass of point $\spoint_j$.
Hence,  the attention weight is 
\begin{equation}
    \catt\parr{\spoint_i,\spoint_j}\approx\frac{\dweight_{ij}}{\sum_k\mass_{kk} \dweight_{ik}},
\end{equation}

Similarly, for the integral in Equation~\eqref{eq:cweighted_sum} we get 
\begin{equation}
    \int \catt\parr{s,t}\cvalue\parr{t} \text{d}t\approx\sum_j \mass_{jj} \catt\parr{\spoint_i,\spoint_j}\cvalue\parr{\spoint_j}.
\end{equation}
Thus, plugging in $\cvalue\parr{\spoint_i}\approx\dvalue_i$ the final aggregated feature is reduced to a simple mass-weighted version of the regular attention mechanism:
\begin{equation}
    \dfeature_i = \frac{\sum_j\mass_{jj}\dweight_{ij}\dvalue_j}{\sum_k\mass_{kk} \dweight_{ik}}.
\end{equation}
In spite of this equation being a division of two quadratures by one another, we prove that it still leads to a good approximation of the continuous function $\cfeature(\spoint)$: assuming $\dquery,\dkey,\dvalue$ are triangulation-agnostic and good approximations (up to $\varepsilon$) of the underlying continuous functions $\cquery,\ckey,\cvalue$, then the approximation error  $E=\dfeature_i-\cfeature(\spoint_i)$ is $\bigo{\varepsilon + h^2}$, with $h$ being the maximal edge length of the mesh. We prove this in Appendix~\ref{sec:self_attn_proof}.   Figure ~\ref{fig:3_disc_agnotism_sa} qualitatively illustrates the importance of mass weighting in triangulation agnosticism: given a heavily re-triangulated mesh, a network with a mass-weighted attention layer is able to recover the GT signal, whereas naive attention fails. (For more details about the setup of this experiment, refer to Section~\ref{sec:disc-agnostic-validation})
\subsection{Incorporating the Attention Layers}
We incorporate our attention layers as follows: inside each block, $\dquery$ and possibly $\dkey,\dvalue$ are predicted by a triangulation-agnostic network~\cite{poissonet} which is input with the output of the previous block. For cross-attention (resp. self-attention)  we utilize $\dquery$ (resp. $\dquery,\dkey,\dvalue$) as described in Section~\ref{sec:cross_attn} (resp. \ref{sec:self_attn})  to output the features $\dfeature$. See supplementary, Section~\ref{sec:architecture} for full details on the incorporation of these layers inside a learning architecture.

%% file: sections/5_results.tex
\section{Experiments}

We now evaluate our method on several benchmarks, showing it outperforms both mesh and point cloud state-of-the-art architectures.  In all experiments, our method uses a similar or fewer number of parameters than that of the baselines. Unless specified otherwise, \textbf{Ours} refers to a network with PoissonNet and our proposed attention layers added to it.  
{Refer to the supplemental for full details of the implementation of our models and the baselines.}
\mypar{Datasets.}
\label{sec:dataset}
In order to evaluate our methods over large datasets with groundtruth correspondence between models, we follow previous works~\cite{NJF,poissonet} and use the SMPL parametric model~\cite{SMPL-X_pavlakos_2019} to generate different humans in various poses, drawn from a combination full-body yoga poses~\cite{moyo} and human-object poses with elaborate finger articulation~\cite{GRAB:2020}.


\mypar{Evaluation Metrics.}
For tasks involving function prediction on surfaces (e.g., positional encoding), we report mean squared error (MSE) and peak signal-to-noise ratio (PSNR). For shape deformation tasks (Sec. \ref{sec:shape_deform}), we measure vertex-wise MSE (using the given ground truth), chamfer distance (CD), and Hausdorff distance (HD).

\input{figures/positional_encoding_qual}
\input{figures/qual_position_encoding_flex}
\subsection{Predicting High Frequency Intrinsic Signals}
\label{sec:pose-encoding}

We begin by evaluating our method's ability to capture high-frequency intrinsic functions over meshes. To construct the training set, we compute the Laplacian eigenvectors of a \textit{single} SMPL mesh, and then transfer these functions to all other meshes in the training set using the 1-to-1 correspondence between SMPL meshes, thereby providing a canonical set of functions. We then train our framework to predict these functions over different meshes. 

\subsubsection{Discretization-Agnosticism of Mass-Weighting}
\label{sec:disc-agnostic-validation}
As Figure ~\ref{fig:6_qual_pos} shows, our method accurately predicts high-frequency functions on meshes it was not trained on (bottom row). Similarly, Figure~\ref{fig:qual_pos_enc} demonstrates generalization to a variety of out-of-distribution (OOD) shapes. Figure~\ref{fig:3_disc_agnotism_sa} isolates the role of the mass weighting: on a challenging discretization (half simplified, half subdivided), a network trained with the mass-weighting layer recovers the ground-truth eigenfunction, whereas naive attention fails. Table~\ref{tab:robustness} quantifies this difference on a subset of re-triangulated meshes, reporting PSNR with respect to the ground truth.

\subsubsection{Comparisons.}
\input{figures/single_source_ca_results}
The above task serves as an excellent benchmark to compare the accuracy of our method to three state-of-the-art baselines: 1) PTV3~\cite{PTV3}, a state-of-the-art 3D \emph{point-clouds} transformer; 2) HodgeFormer~\cite{hodgeformer}, a transformer that operates over meshes via discrete exterior calculus (DEC); 3) PoissonNet~\cite{poissonet}, a state-of-the-art intrinsic, triangulation-agnostic method for learning over meshes. 

Table~\ref{tab:1_pos_encoding}  quantitatively verifies that our method significantly outperforms these baselines, while utilizing the smallest number of parameters. Note we evaluate on the original SMPL triangulation used during training -- this verifies that our intrinsic awareness of the mesh's  Riemannian structure directly improves performance, regardless of any need to operate on different triangulations. We compare to two variants of  PTV3, with $\times5$ and $\times90$ more parameters than ours, both of them lack any awarness of the underlying mesh structure and hence are less effective than us.  Similarly, HodgeFormer attempts to \textit{predict} intrinsic operators but does not utilize the ones stemming from the induced Riemannian metric; Note all of these methods' performance further deteriorate when operating on meshes with different triangulations (Figure~\ref{fig:6_qual_pos}, bottom row). While PoissonNet is triangulation-agnostic, it does not use an attention mechanism, hence it lacks the ability to accurately capture high frequencies, leading to an erroneous result on the fingers in Figure~\ref{fig:6_qual_pos}.

\subsection{Shape Deformation}
\label{sec:shape_deform}

We next tackle \textit{shape deformation}, a fundamental task in geometry processing, which also serves as a common benchmark for evaluating mesh-based learning methods~\cite{NJF,poissonet}. In this task, the network is trained on source-target pairs of SMPL~\cite{SMPL-X_pavlakos_2019} meshes, where the network receives the source mesh, along with the target's SMPL pose parameters. The network is then trained to deform the source mesh, i.e., predict a new position to each of its vertices so that they align with the target vertices, conditioned on the given pose parameters. To condition our framework on the pose parameters, we utilize the intrinsic cross-attention mechanism (Section~\ref{sec:cross_attn}), aggregating values from the entries of the pose parameters vector into the vertices of the mesh in each block. We compare our method to PoissonNet~\cite{poissonet}, which is considered the state of the art in learning of deformations.
\subsubsection{Single source deformation.}
\label{sec:single_source}
We begin by evaluating the ability of our method to predict deformations of a single fixed-source mesh (we use the T-Pose of an SMPL human). Figure~\ref{fig:5_single_source_qual} shows that our method can capture fine-grained finger articulation, due to the attention layers, while vanilla, attention-less PoissonNet does not attain such accurate finger articulation. Table~\ref{tab:single_source} verifies this gap.

\subsubsection{Retroactive application of attention to 2022 state of the art.} While machine learning often relies on cutting-edge recent developments, we emphasize our attention mechanism could have been directly implemented several years ago. We experimentally show this by implementing a variant which incorporates our attention mechanism with only the means available in 2022, namely a DiffusionNet~\cite{diffusionnet} backbone. Indeed,
Figure~\ref{fig:5_single_source_qual} and Table~\ref{tab:single_source} show that using \textit{2022}'s DiffusionNet with our attention layers still outperforms \textit{2026}'s current state of the art -- PoissonNet~\cite{poissonet} (without any attention). This stems from cross attention providing a much more elaborate conditioning mechanism than simple concatenation of additional signals as inputs to the layers. 
\input{figures/random_source_qual}

\subsubsection{Arbitrary source deformation.}
\label{sec:random_source}
\input{tables/quan_tria_agnotism}
\input{tables/quan_position_encoding}
\input{tables/quan_single_source}
As a more challenging task, we evaluate the ability of our method to predict deformations when the source is an arbitrary SMPL human in an arbitrary pose, by sampling random source/target pairs from our dataset (Section~\ref{sec:dataset}) for training and testing. We additionally feed as input to the first block of the pipeline the high-frequency signals discussed in Section~\ref{sec:pose-encoding}, predicted using our self-attention blocks. These serve as an intrinsic \emph{positional encoding} for each vertex.  

 Once our network is trained, triangulation agnosticism enables it to generalize and accurately deform non-SMPL models, as shown in Figure~\ref{fig:application_articulations}. Note that our approach is able to capture accurate finger articulations for arbitrary shapes, which was not possible with previous techniques.

\subsubsection{Ablation and comparison.}Table~\ref{tab:3_random_source} further shows an ablation in which we strip away different parts of the pipeline -- cross attention, self attention, and the positional encoding -- eventually leaving only attention-less PoissonNet to compare with. Each removal reduces accuracy, showing the significant contribution of the attention layers.

As part of this ablation, we observe that removing the positional encoding slightly reduces accuracy, however makes the trained model generalize better to non-humans (e.g., having four fingers instead of 5).  Figure~\ref{fig:qual_comparison_random_source} shows examples of deformations of such non-human models; though producing less granular deformations, this ablated method still exceeds the current state of the art -- we attribute this to the use of the triangulation-agnostic cross attention as a conditioning technique. 

\input{figures/random_source_qual_wo_pe}

\input{tables/quan_random_source}
\input{tables/quan_correspondances}
\input{figures/qual_flex_correspondances}
\input{figures/hks_results}
\input{figures/application_position_encoding}

\subsection{Correspondences}
\label{sec:correspondance}
Due to the granularity of our method's prediction,  we find that it can also serve as an effective model for computing correspondences. In order to show the accuracy of the framework, we employ the simplest possible matching strategy: given a source shape and a target shape, we use the trained model from Section~\ref{sec:pose-encoding} to predict per-vertex features, normalize them to unit norm, and perform Euclidean nearest-neighbor matching. We additionally augment the training set with random rotations and omit random patches from the mesh. 

\subsubsection{Comparison to state-of-the-art correspondence methods.} We compare our approach against a state-of-the-art technique specifically designed for this task -- DiffuMatch~\cite{diffumatch}, along with previous methods that work has compared to. We follow their testing protocol, and report the average geodesic error to the ground truth mapping in Table~\ref{tab:quan_corresponance}, using the reported scores from their Table 1 for the other methods. Surprisingly, without any training for correspondence with source-target pairs, our simple approach is superior to the current state of the art on four out of five benchmarks (we find that the lack of success on the remaining dataset (DT4D-Intra) is due to some non-humanoid outliers, e.g., a cartoon mouse with one finger).

Figure~\ref{fig:qual_corr} visualizes the correspondences by transferring a textured UV map from the source to the target mesh, using the computed correspondences. Our method produces continuous, meaningful correspondences. While DiffuMatch produces reasonable results for more human-like meshes, its accuracy degrades for out-of-distribution meshes (alien,  bottom row).

We further stress-test the accuracy of our predicted features by computing \textit{partial matches} between a partial mesh and a full one. This shows our attention-based feature predictor's power and robustness. In contrast, DiffuMatch~\cite{diffumatch} suffers a significant degradation in results (See Fig \ref{fig:qual_corr} bottom row). See the supplemental for more details on these experiments.

\subsubsection{Comparison to other feature extractors.}  Other methods have designed large point-based feature extractors and showed they achieve state-of-the art results for correspondence. We compare to two of them, as shown in Figure~\ref{fig:6_application_position_encoding}:
Utonia~\cite{utonia} is a large, 135-million-parameter model, trained on a dataset of more than a million samples; it is nonetheless a point-based, non-intrinsic method that lacks the inductive bias to enable it to produce continuous, dense mappings between regions, resulting in a highly discontinuous map. SA3DF~\cite{sa3df} distills vision features generated via multiview renderings onto a mesh, and is trained with a contrastive loss involving surface geodesics; however, beyond the training loss, SA3DF  does not utilize intrinsic information during prediction, and further relies on aggregation of vision features rendered from multiple views. This leads to under-performance compared to our method.




\subsection{Predicting Heat Kernel Signatures}
\input{tables/quan_hks}
To evaluate the ability of our method to predict signals on general meshes, we follow the same experiment and dataset from PoissonNet~\cite{poissonet} (their Section 6.5, "Learning local signals") and train our method to predict Heat Kernel Signature~\cite{heatKernelSignature} features (HKS) using their dataset of ground truth HKS features on Thingi10K~\cite{Thingi10K} meshes. Table~\ref{tab:hks_results_quan} shows that our method is twice as accurate as PoissonNet. Figure~\ref{fig:hks_results} shows that the errors PoissonNet introduces stem from subtle geometric features that are hard to account for without an attention mechanism.

\subsection{Efficiency Analysis}
\input{tables/runtime}
Tab. \ref{tab:runtime} shows a speed/memory analysis of running PoissonNet with and without our attention layers, thereby highlighting the attention's impact on efficiency. We show this analysis for two experiments, in each of which we isolate each of the two attention types.   Cross attention operates between vertices and low-dimensional signals, and hence comes at a marginal extra cost for speed and memory. Self attention leads to a significant slowdown, however, in terms of memory, note that since our choice of FEM discretization (Section~\ref{sec:self_attn}) enables the use of a memory-efficient attention implementation~\cite{xformer}, memory consumption does not increase.  

%% file: figures/positional_encoding_qual.tex
\begin{figure}[t]
\centering
    \includegraphics[width=1\linewidth]{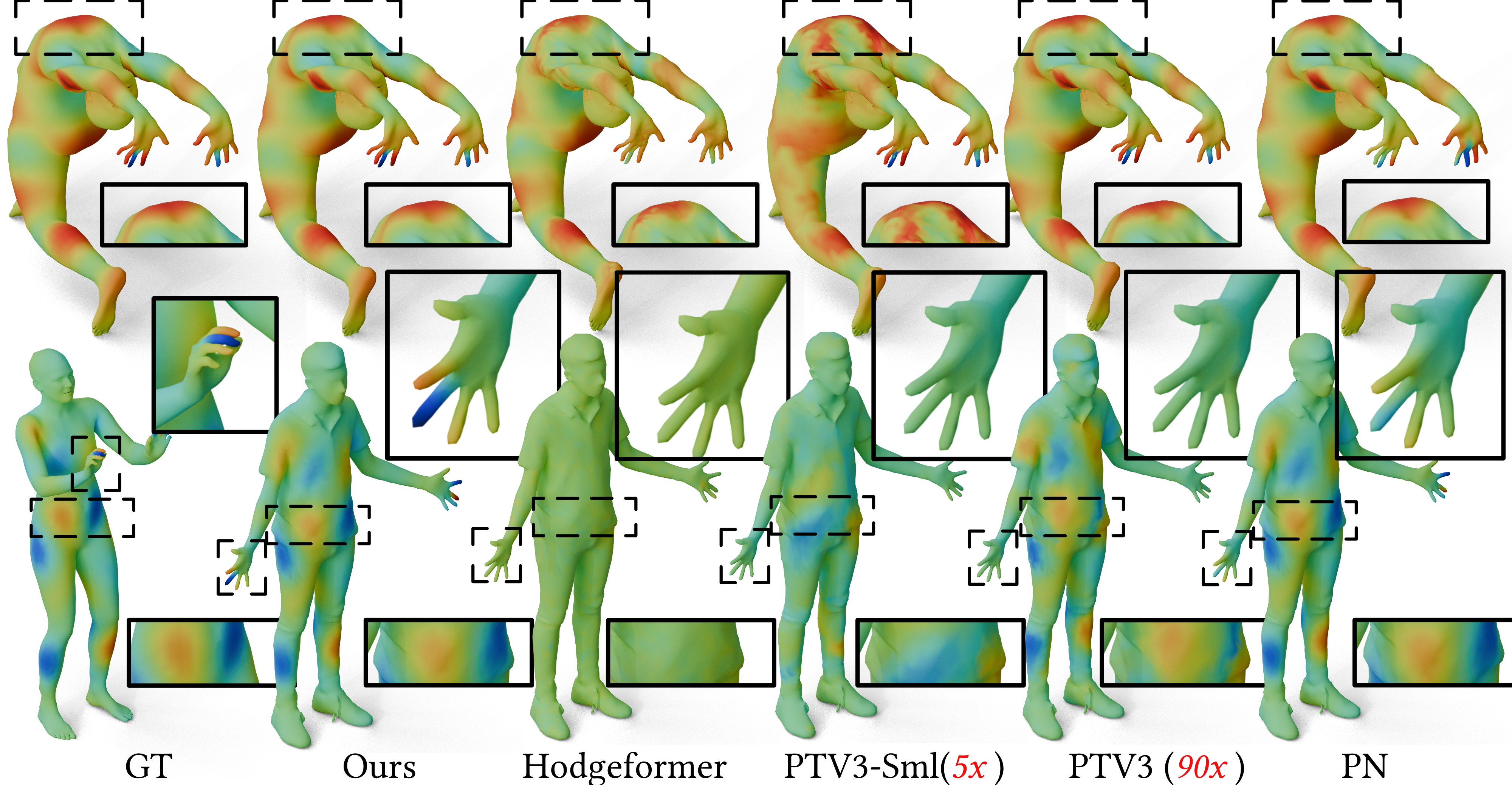}
    \caption{
  \textbf{Comparison on  high frequency signal prediction (Section~\ref{sec:pose-encoding}).} Our method accurately predicts high-frequency functions.  PoissonNet~\cite{poissonet} makes incorrect predictions on intricate areas such as the fingers, while hodgeformer~\cite{hodgeformer} and PTV3~\cite{PTV3} (both with {5x} and \textbf{90x} more parameters) are not triangulation-agnostic and thus make incorrect predictions on differently-triangulated meshes (bottom row). }
    \label{fig:6_qual_pos}
\end{figure}

%% file: figures/qual_position_encoding_flex.tex
\begin{figure}[t]
\centering
\includegraphics[width=1\linewidth]{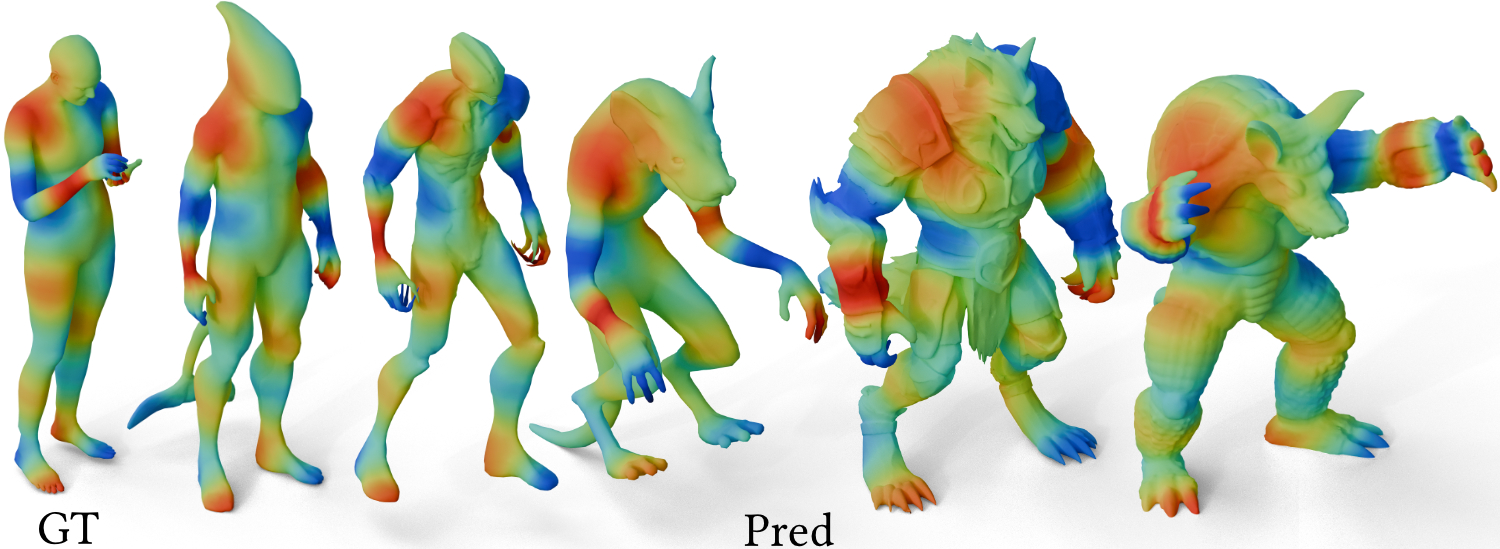}
   \caption{
   \textbf{Prediction of high frequency intrinsic functions (Section~\ref{sec:pose-encoding} for out-of-distribution shapes. }
Our model, trained solely on SMPL~\cite{SMPL-X_pavlakos_2019} meshes, makes highly-accurate predictions of high Frequency signals on shapes that are significantly different (prediction of same signal shown on all shapes).}
    \label{fig:qual_pos_enc}
\end{figure}

%% file: figures/single_source_ca_results.tex
\begin{figure}[t]
\centering
    \includegraphics[width=\linewidth]{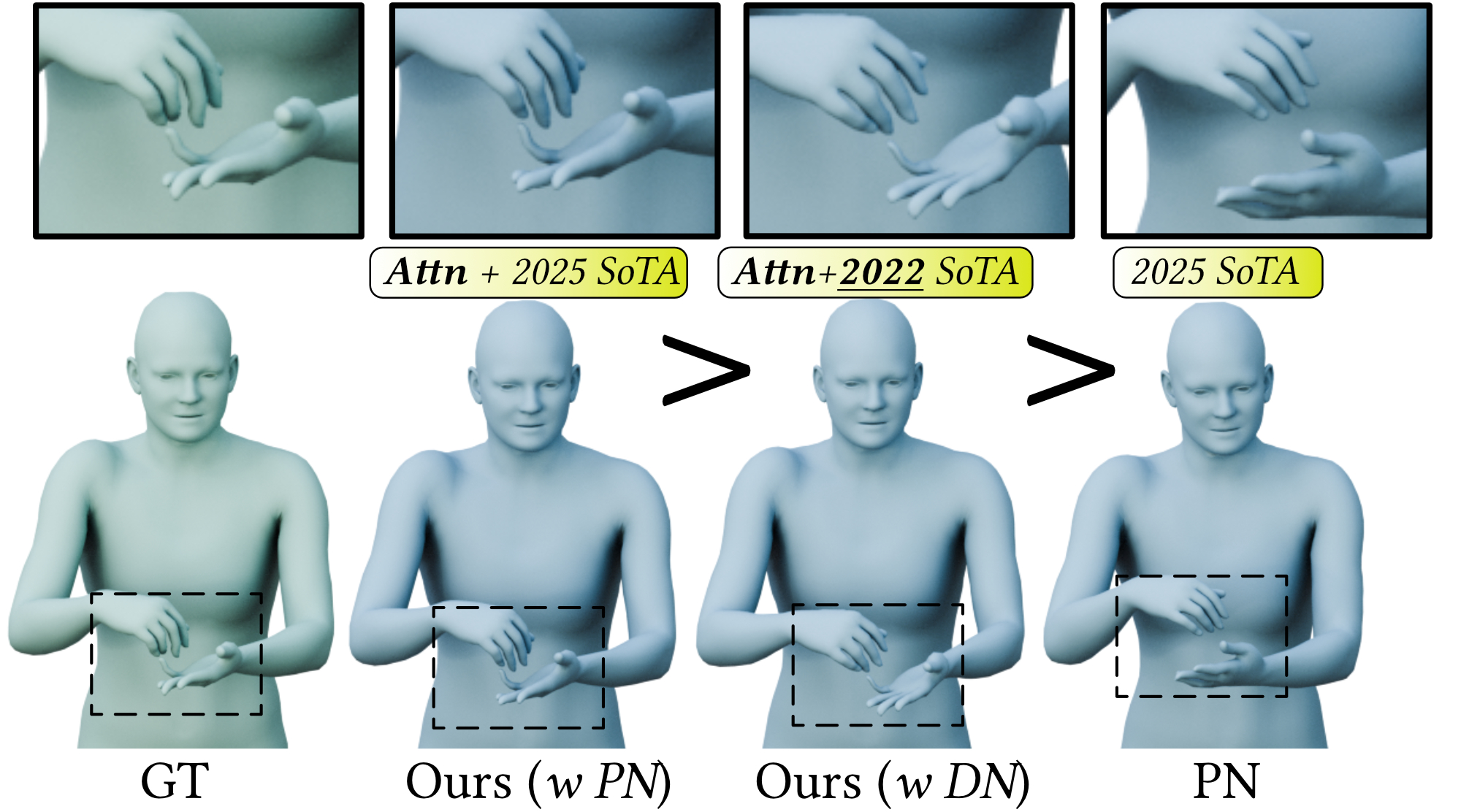}
    \caption{
    \textbf{Improvement in performance for new and old methods via our attention layers, shown on the deformation task.}
    Incorporating our simple layers leads to an immediate improvement for an older method from \textit{several years ago}, DiffusionNet~\cite{diffusionnet}  (\textit{DN}), which outperforms the \textit{current} state of the art, PoissonNet~\cite{poissonet} (\textit{PN}); note articulation of fingers. Incorporating our layers with PoissonNet itself leads to an even stronger result.
    }

    \label{fig:5_single_source_qual}

\end{figure}

%% file: figures/random_source_qual.tex
\begin{figure*}[t]
\centering
\includegraphics[width=1\textwidth]{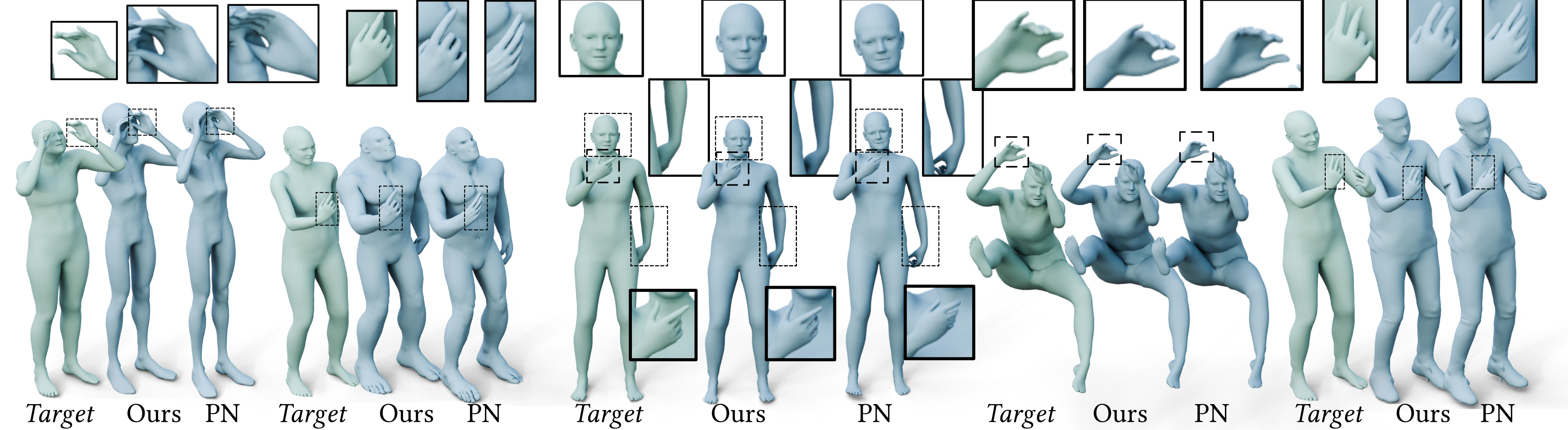}
    \caption{
   \textbf{Articulations of Arbitrary Source Meshes.} Our method can articulate arbitrary humanoids from diverse sources (artist-generated, 3D Scanned, AI-generated), without the use of rigs or training on these shapes. Each triplet shows the ground truth target position in green, our prediction, and the prediction of PoissonNet~\cite{poissonet}, which lacks the ability to predict correct fine-grained deformations (e.g., hands).
   }

\label{fig:application_articulations}
    \Description{TBD}
\end{figure*}

%% file: tables/quan_tria_agnotism.tex
\begin{table}[t]
\centering
\small
\begin{tabular}{lrrrr}
\toprule
Method & Original & Coarsened & Subdivided & Var.\ Density \\
\midrule
Naive Attention & 33.69 & 24.24 & 32.58 & 22.40 \\
\textbf{Ours} & \textbf{35.63} & \textbf{34.92} & \textbf{33.94} & \textbf{33.47} \\
\bottomrule
\end{tabular}
\caption{
\textbf{Robustness to triangulation (Section~\ref{sec:disc-agnostic-validation}).}
a network trained with our proposed attention layers on a single discretization is stable across changes to the original discretization, whereas naive attention breaks down (see difference in var density, coarsened)
}
\label{tab:robustness}
\end{table}

%% file: tables/quan_position_encoding.tex

\begin{table}[t]
\centering
\small
\begin{tabular}{lccr}
\toprule
Method & MSE ($\times 10^{-3}$) $\downarrow$ & PSNR $\uparrow$ & \#Params \\
\midrule
PoissonNet~\cite{poissonet} (= Ours w/o Attention) & 1.19 & 29.2 & 700k \\
Hodgeformer~\cite{hodgeformer} & 9.00 & 20.0 & 741k \\
PTv3~\cite{PTV3} $\times 5$ params & 1.97 & 27.1 & 4,000k \\
PTv3~\cite{PTV3} $\times 90$ params & 0.40 & 33.8 & 46,000k \\
Ours & \textbf{0.29} & \textbf{35.2} & \textbf{696k} \\
\bottomrule
\end{tabular}
\caption{
\textbf{Comparison to state-of-the-art methods on intrinsic signal prediction (Section~\ref{sec:pose-encoding}).} our attention layers attain the highest accuracy while utilizing the lowest number of parameters.
}
\label{tab:1_pos_encoding}
\end{table}

%% file: tables/quan_single_source.tex
\begin{table}[t]
\centering
\small
\begin{tabular}{lrrr}
\toprule
Method & CD $\downarrow$ & HD $\downarrow$ & MSE $\downarrow$ \\
\midrule
DiffusionNet~\cite{diffusionnet}          & 40.8 & 7.6 & 10.3 \\
PoissonNet~\cite{poissonet}            & 33.2 & 6.0 & 8.9 \\
Ours (w DiffusionNet)   & 30.0 & 5.9 & 7.0 \\
\textbf{Ours (w PoissonNet)} & \textbf{19.0} & \textbf{4.4} & \textbf{4.1} \\
\bottomrule
\end{tabular}
\caption{
\textbf{Usefulness of the attention layers with different backbones for the Deformation task (Section~\ref{sec:shape_deform}).}
Combining our attention layers with DiffusionNet would have been feasible in 2022, leading to an improvement in results compared to 2025's state-of-the-art (PoissonNet). Adding attention to the latter improves results further.
}
\label{tab:single_source}
\end{table}

%% file: figures/random_source_qual_wo_pe.tex
\begin{figure*}[!tbp]
\centering
\includegraphics[width=\textwidth]{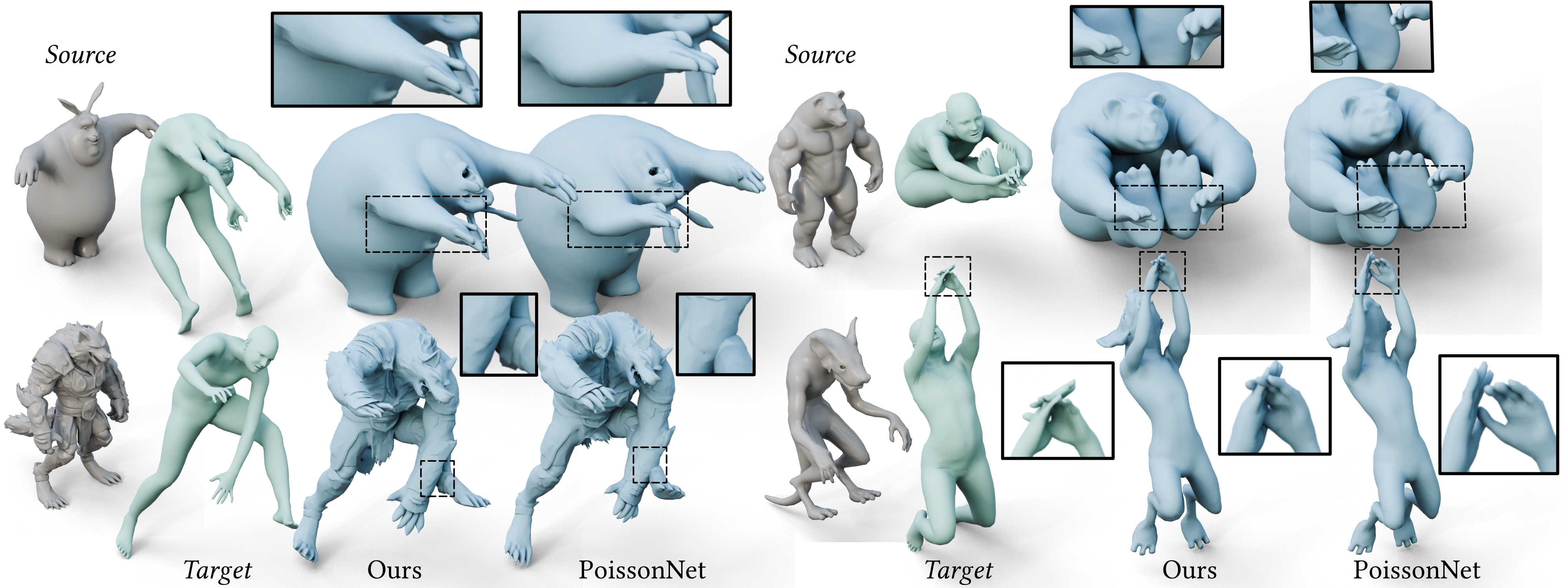}
    \caption{
   \textbf{Generalization of our deformation network to out-of-distribution shapes.} Our method, trained solely on human SMPL~\cite{SMPL-X_pavlakos_2019} meshes, generalizes to other types of humanoids, and exhibits more accurate deformations than PoissonNet~\cite{poissonet}. Target pose shown in green next to predicted deformations.
    }
    
\label{fig:qual_comparison_random_source}
    \Description{TBD}
\end{figure*}

%% file: tables/quan_random_source.tex
\begin{table}[t]
\centering
\small
\begin{tabular}{lrcr}
\toprule
Method & CD $\downarrow$ & HD $\downarrow$ & MSE $\downarrow$ \\
\midrule
Ours w/o PE \& CA (= PoissonNet) & 12.2 & 68.0 & 16.5 \\
Ours w/o PE & 9.0 & 47.0 & 10.7 \\

Ours w/o CA \& SA  & 11.6 & 64.0 & 16.4 \\
Ours w/o CA  & 10.0 & 56.0 & 14.0 \\
Ours & \textbf{7.2} & \textbf{38.7 }& \textbf{8.0} \\
\bottomrule
\end{tabular}
\caption{
\textbf{Ablation study on the mesh deformation task (Section~\ref{sec:shape_deform}).}
We evaluate the importance of each element of our deformation pipeline - self-attention (SA), cross-attention (CA), and Positional Encoding (PE). Both attention mechanisms significantly improve performance. 
}
\label{tab:3_random_source}
\end{table}

%% file: tables/quan_correspondances.tex




\begin{table}[t]
\centering
\small
\begin{tabular}{lccccc}
\toprule
Method & FAUST & SCAPE & SH19 & D-Intra & D-Inter \\
\midrule

Ini + Zoom (Lap.) & 3.8 & 7.5 & 13.1 & 1.8 & 16.5 \\
Ini + Zoom (Res.) & 3.2 & 5.7 & 12.4 & \textbf{1.6} & 13.4 \\
Smooth Shells~\cite{smoothshells} & 2.5 & 4.7 & 12.2 & / & / \\

\midrule
3D-CODED~\cite{3dcoded}& 7.5 & 17.2 & 13.4 & 45.0 & 61.4 \\
Simpl. Fmaps~\cite{simplified_fmaps} & 1.7 & 2.3 & 3.4 & 2.0 & {8.9} \\
SNK ~\cite{attaiki2023shape}& 1.8 & 4.7 & 5.8 & 2.0 & 9.0 \\

\midrule
DiffuMatch~\cite{diffumatch}& 1.9 & 4.4 & 3.9 & 1.8 & 8.6 \\
Ours & \textbf{1.24} & \textbf{1.89} & \textbf{2.83} & 2.97 & \textbf{3.31} \\

\bottomrule
\end{tabular}
\caption{
\textbf{Comparison to other methods on the correspondence benchmark from DiffuMatch~\cite{diffumatch}.} Our naïve approach attains the best results on 4 out of 5 datasets.}
\label{tab:quan_corresponance}
\end{table}

%% file: figures/qual_flex_correspondances.tex
\begin{figure*}[t]
\centering
\includegraphics[width=1\textwidth]{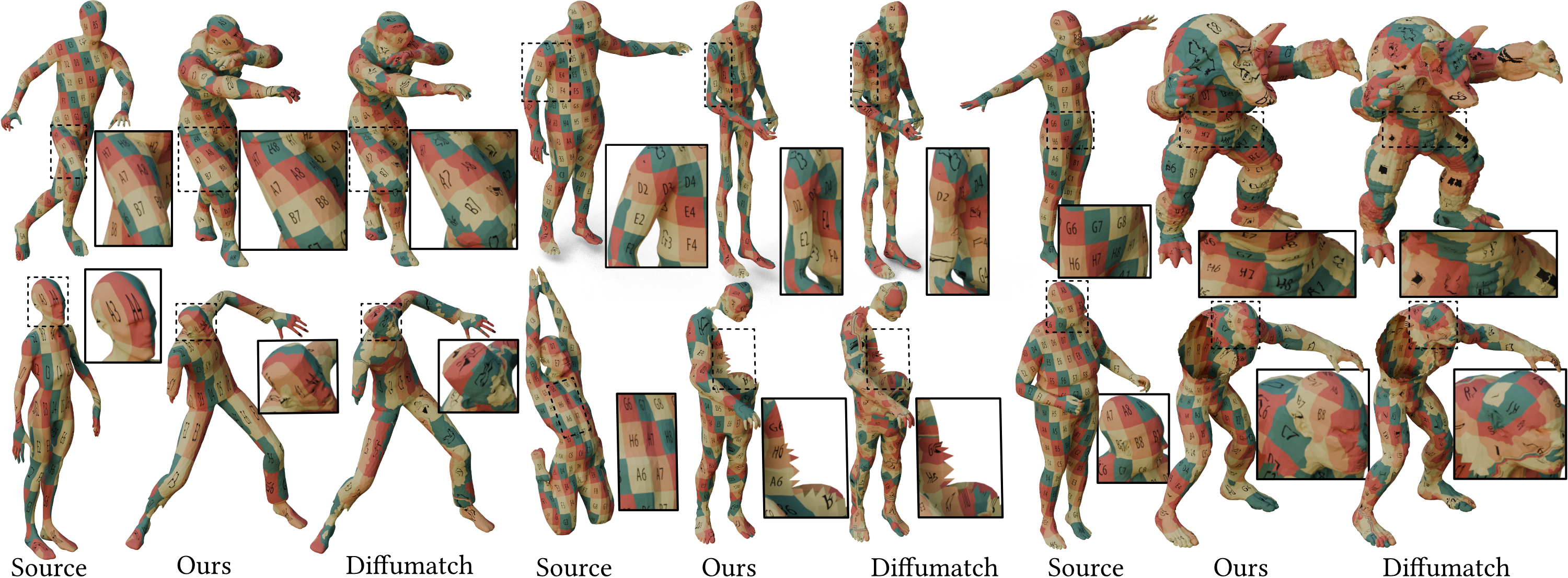}
    \caption{
   \textbf{Computed correspondences between pairs of meshes.} Our method produces dense, smooth, and continuous correspondences (visualized by transferring a texture from one mesh to the other using the correspondence map). Diffumatch~\cite{diffumatch} exhibits discontinuities and erroneous matchings.
}
    \label{fig:qual_corr}
\end{figure*}

%% file: figures/hks_results.tex
\begin{figure*}[t]
\centering
    \includegraphics[width=1\textwidth]{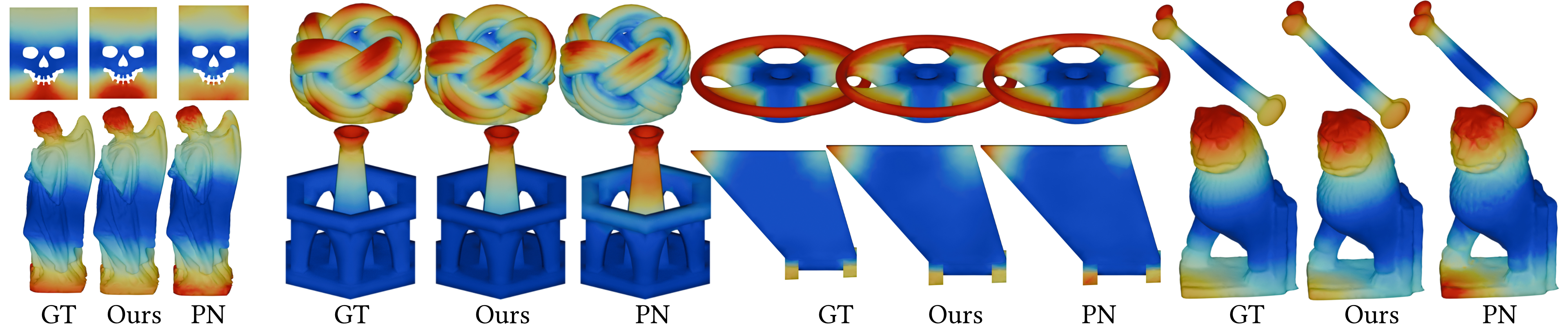}
    \caption{
    \textbf{Predicted heat kernel signature~\cite{heatKernelSignature} features over general meshes.} Our method reliably predicts intrinsic features on a dataset of general meshes, matching the ground truth significantly better than PoissonNet (PN)~\cite{poissonet}.
    }

    \label{fig:hks_results}
\end{figure*}

%% file: figures/application_position_encoding.tex
\begin{figure}[t]
\centering
    \includegraphics[width=\linewidth]{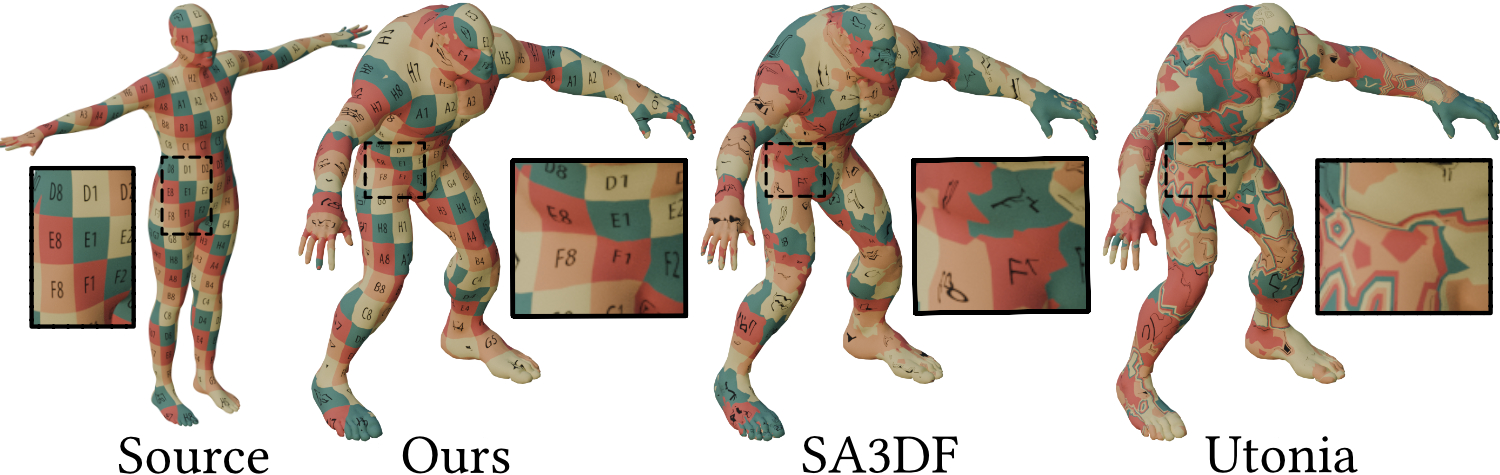}
   \caption{
   \textbf{Comparison against foundational feature extractors on the correspondence task (Section~\ref{sec:correspondance}).} The features from our method provide significantly better correspondences than both a foundational point cloud feature extractors~\cite{utonia}, as well as a 2D feature extract on meshes~\cite{sa3df}. Our method is the only one that produces smooth, continuous matching.
    }
\label{fig:6_application_position_encoding}
\end{figure}

%% file: tables/quan_hks.tex
\begin{table}[t]
\centering
\small
\begin{tabular}{lcc}
\toprule
Method & MSE ($\times 10^{-5}$) $\downarrow$ & PSNR $\uparrow$  \\
\midrule
PoissonNet & 2.8  & 15 \\
Ours & 1.8  & 17 \\
\bottomrule
\end{tabular}
\caption{
\textbf{Comparison on prediction of Heat Kernel  Signatures~\cite{heatKernelSignature}:} Our attention layers result in a significant $\times2$ improvement in accuracy.  
}
\label{tab:hks_results_quan}
\end{table}

%% file: tables/runtime.tex




\begin{table}[t]
\centering
\small
\setlength{\tabcolsep}{4pt} 
\begin{tabular}{lrrrr}
\toprule
Method & \#Params & 2k faces & 10k faces & 300k faces \\
\midrule

\multicolumn{5}{c}{\textit{High Frequency Intrinsic Signals}} \\
PoissonNet & 700k & 3.0 (25) & 6.3 (148) & 101 (2700) \\
Ours (PoissonNet + SA) & 696k & 3.5 (25) & 10.1 (147) & 1039 (2600) \\

\midrule
\multicolumn{5}{c}{\textit{Deformation}} \\
PoissonNet & 1.3M & 5.9 (28) & 12.0 (151) & 184 (2700) \\
Ours (PoissonNet + CA) & 1.5M & 6.0 (28) & 12.8 (151) & 200 (2700) \\

\bottomrule
\end{tabular}
\caption{\textbf{Runtime and memory usage of the attention layers.}
We apply our attention to PoissonNet~\cite{poissonet} and evaluate the increase in runtime and memory usage.
For each entry, we state the inference time in milliseconds along with peak GPU memory in megabytes in brackets, over several mesh resolutions.  Cross-attention (CA) comes at a minimal extra cost, however self-attention (SA) leads to a significant slowdown. Neither leads to higher memory consumption.
}
\label{tab:runtime}
\end{table}

%% file: sections/conclusion.tex
\section{Conclusion}
Our intrinsic, triangulation-agnostic attention layers provide a significant improvement to mesh-based learning tasks. We expect their use to lead to new applications, similar to the gains that attention layers have brought to other domains. For one, we are excited to use the attention layers as a backbone for mesh representation learning or in a large generative model that operates natively over meshes. Similarly, we note that we only explored conditioning on SMPL pose parameters for deformation, however cross attention opens up possibilities for prediction on meshes conditioned on other modalities such as vision and language.  We further note that the shape correspondence experiment has been set up in a very naive, straightforward manner; we believe that revisiting it with a more tailor-made architecture and loss can significantly increase its performance and scope, and we mark it as important future work.

Our work does hold several limitations. For one,  the use of PoissonNet~\cite{poissonet} as a backbone stipulates  we must operate over manifold meshes of a single connected component, which limits our applicability. In order to alleviate this, we aim in the future to extend the attention mechanism to propagate information between different connected components. A second issue is the scalability of our method, which still cannot operate on very large meshes in practical runtimes, both due to PoissonNet's PDE solves, as well as our self-attention mechanism. However we note we successfully ran experiments on most common mesh learning baselines without difficulty. Lastly, we note that our method for computing correspondence has no guarantees regarding injectivity or continuity of the resulting mesh-to-mesh map, and we mark it as important future work.

%% file: sections/7_figures.tex

    
    

    
    
    



%% file: sections/supp.tex
\input{sections/subparts/cross_attention_proof}

\input{sections/subparts/self_attention_proof}

\input{sections/subparts/other_supp_stuff}

%% file: sections/subparts/cross_attention_proof.tex
\section{Error Asymptotics For The Discretized Cross Attention}
\label{sec:cross_attn_proof}

We prove that the discretization of cross-attention, Section~\ref{sec:cross_attn} leads to an $\bigo{\varepsilon}$ error. 

Consider the softmax-weighted aggregation
\[
    f(\mathbf{p}) \;=\; \frac{\displaystyle\sum_j e^{\langle \mathbf{p},\mathbf{k}_j\rangle}\mathbf{v}_j}{\displaystyle\sum_k e^{\langle \mathbf{p},\mathbf{k}_k\rangle}},
\]
with keys $\{\mathbf{k}_j\}$ and values $\{\mathbf{v}_j\}$ fixed and bounded. Set $\varepsilon = \|\mathbf{E}\|$ and $\kappa = \max_j\|\mathbf{k}_j\|$.

\begin{proposition}
$\bigl\|f(\mathbf{p}+\mathbf{E}) - f(\mathbf{p})\bigr\| = O(\varepsilon)$ as $\varepsilon \to 0$.
\end{proposition}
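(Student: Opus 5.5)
We will show that $f$ is Lipschitz in its argument on a neighbourhood of $\mathbf{p}$, with an explicit constant. The bound then follows from the mean value inequality. The natural route is to compute the Jacobian of $f$ with respect to $\mathbf{p}$ and bound its operator norm uniformly in terms of $\kappa$ and $\nu=\max_j\|\mathbf{v}_j\|$.

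\textbf{Step 1: the Jacobian.} Write $\alpha_j(\mathbf{p}) = e^{\langle \mathbf{p},\mathbf{k}_j\rangle}/\sum_k e^{\langle \mathbf{p},\mathbf{k}_k\rangle}$, so that $f(\mathbf{p})=\sum_j \alpha_j(\mathbf{p})\mathbf{v}_j$. A direct differentiation of the softmax gives $\nabla_{\mathbf{p}}\alpha_j = \alpha_j(\mathbf{k}_j - \bar{\mathbf{k}})$, where $\bar{\mathbf{k}}(\mathbf{p})=\sum_k\alpha_k\mathbf{k}_k$. Hence
\[
  Df(\mathbf{p}) \;=\; \sum_j \alpha_j(\mathbf{p})\,\mathbf{v}_j(\mathbf{k}_j-\bar{\mathbf{k}}(\mathbf{p}))^{\top}.
\]
This is the covariance of $\mathbf{v}$ and $\mathbf{k}$ under the probability weights $\alpha$.

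\textbf{Step 2: a uniform bound.} Since the $\alpha_j$ are nonnegative and sum to one, $\|\bar{\mathbf{k}}\|\le\kappa$, and so $\|\mathbf{k}_j-\bar{\mathbf{k}}\|\le 2\kappa$. Each rank-one term then has norm at most $2\kappa\nu$. Summing with the convex weights gives $\|Df(\mathbf{p}')\|\le 2\kappa\nu$ for every $\mathbf{p}'$. This bound holds globally and does not depend on the magnitude of the exponents.

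\textbf{Step 3: integrate along the segment.} Apply the fundamental theorem of calculus along $t\mapsto \mathbf{p}+t\mathbf{E}$, $t\in[0,1]$:
\[
  \bigl\|f(\mathbf{p}+\mathbf{E})-f(\mathbf{p})\bigr\|=\Bigl\|\int_0^1 Df(\mathbf{p}+t\mathbf{E})\,\mathbf{E}\,dt\Bigr\|\le 2\kappa\nu\,\varepsilon .
\]
This gives the claimed $O(\varepsilon)$ bound, and in fact for every $\varepsilon$, not only asymptotically.

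The only point needing care is Step 2: the bound must not depend on $\mathbf{p}$, even though the exponentials can be large. This is handled by the softmax structure. The weights always form a probability vector, so the large exponentials cancel between numerator and denominator. If one instead bounded the numerator and denominator of $f$ separately, the resulting constants would blow up like $e^{\|\mathbf{p}\|\kappa}$. That would still give $O(\varepsilon)$ for fixed $\mathbf{p}$, but with a worse, $\mathbf{p}$-dependent constant.
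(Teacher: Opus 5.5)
Your proof is correct, and it takes a different route from the paper's. The paper never differentiates. It sets $\beta_j = e^{\langle\mathbf{E},\mathbf{k}_j\rangle}$, so the perturbed weights are $\alpha_j\beta_j$, where its $\alpha_j$ are the unnormalized exponentials. It then puts the two softmax averages over a common denominator to get the exact identity $f(\mathbf{p}+\mathbf{E})-f(\mathbf{p}) = \sum_{j,k}\alpha_j\alpha_k(\beta_j-\beta_k)\mathbf{v}_j/(AB)$. From there it uses $\beta_j = 1+O(\varepsilon)$ and $B = A(1+O(\varepsilon))$, treating $A$ as a fixed positive constant.

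You instead bound the Jacobian of the softmax average uniformly: it is the $\alpha$-weighted covariance of $\mathbf{v}$ and $\mathbf{k}$, with norm at most $2\kappa\nu$. You then integrate along the segment. Your route gives an explicit global Lipschitz constant, valid for every $\varepsilon$, and it shows why the bound is uniform: the weights form a probability vector. The paper's route is purely algebraic and needs no calculus.

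Your closing remark does not really apply to the paper's argument, though. It never bounds numerator and denominator separately. If you track its constants in its own notation, you get:
\begin{itemize}
\item $\sum_{j,k}\alpha_j\alpha_k\|\mathbf{v}_j\|/A^2 = \sum_j(\alpha_j/A)\|\mathbf{v}_j\|\le\nu$;
\item $|\beta_j-\beta_k|\le e^{\varepsilon\kappa}-e^{-\varepsilon\kappa}$;
\item $B\ge Ae^{-\varepsilon\kappa}$.
\end{itemize}
Together these give the $\mathbf{p}$-independent bound $\nu(e^{2\varepsilon\kappa}-1)$, whose leading term is exactly your $2\kappa\nu\varepsilon$. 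The paper just does not make this explicit. Your bound is still the cleaner statement, since it stays linear in $\varepsilon$ rather than growing exponentially for large perturbations.
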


\begin{proof}
Write $\alpha_j = e^{\langle\mathbf{p},\mathbf{k}_j\rangle}$, $\beta_j = e^{\langle\mathbf{E},\mathbf{k}_j\rangle}$, $A = \sum_k\alpha_k$, and $B = \sum_k\alpha_k\beta_k$, so that $e^{\langle\mathbf{p}+\mathbf{E},\,\mathbf{k}_j\rangle} = \alpha_j\beta_j$. Combining over a common denominator,
\[
    f(\mathbf{p}+\mathbf{E}) - f(\mathbf{p})
    \;=\; \frac{\displaystyle A\sum_j\alpha_j\beta_j\mathbf{v}_j \;-\; B\sum_j\alpha_j\mathbf{v}_j}{AB}.
\]
Replacing $A$ and $B$ in the numerator by their definitions and exchanging the order of summation,
\begin{align*}
    A\sum_j\alpha_j\beta_j\mathbf{v}_j - B&\sum_j\alpha_j\mathbf{v}_j
    \;=\; \sum_{j,k}\alpha_k\alpha_j\beta_j\mathbf{v}_j - \sum_{j,k}\alpha_k\beta_k\alpha_j\mathbf{v}_j
    \;=\;\\ &\sum_{j,k}\alpha_j\alpha_k(\beta_j-\beta_k)\,\mathbf{v}_j,
    \end{align*}

which gives
\begin{equation}\label{eq:exact}
    f(\mathbf{p}+\mathbf{E}) - f(\mathbf{p})
    \;=\; \frac{\displaystyle\sum_{j,k}\alpha_j\alpha_k(\beta_j-\beta_k)\,\mathbf{v}_j}{AB}.
\end{equation}
By Cauchy--Schwarz, $|\langle\mathbf{E},\mathbf{k}_j\rangle| \leq \varepsilon\kappa = O(\varepsilon)$ for all $j$. Since $e^t = 1 + O(t)$ as $t \to 0$, applying this with $t = \langle\mathbf{E},\mathbf{k}_j\rangle$ gives
\[
    \beta_j \;=\; e^{\langle\mathbf{E},\mathbf{k}_j\rangle} \;=\; 1 + O(\varepsilon),
\]
hence $\beta_j - \beta_k = O(\varepsilon)$ for every pair $(j,k)$, and $B = \sum_k\alpha_k\beta_k = A\bigl(1 + O(\varepsilon)\bigr)$. Since $A = \sum_k e^{\langle\mathbf{p},\mathbf{k}_k\rangle}$ is a finite sum of strictly positive terms depending only on the fixed quantities $\mathbf{p}$ and $\{\mathbf{k}_k\}$, it is a positive constant with respect to $\varepsilon$, so $1/A^2 = O(1)$. Taking norms of both sides of~\eqref{eq:exact}, factoring out the positive scalar $1/AB$, and applying the triangle inequality to the sum in the numerator,
\begin{align*}
    \bigl\|f(\mathbf{p}+\mathbf{E}) - &f(\mathbf{p})\bigr\|
    \;=\; \frac{\displaystyle\left\|\sum_{j,k}\alpha_j\alpha_k(\beta_j-\beta_k)\,\mathbf{v}_j\right\|}{AB}
    \;\leq\;\\&\frac{\displaystyle\sum_{j,k}\alpha_j\alpha_k\,|\beta_j-\beta_k|\,\|\mathbf{v}_j\|}{AB}.
\end{align*}
Substituting $|\beta_j - \beta_k| = O(\varepsilon)$ and $AB = A^2(1+O(\varepsilon))$,
\[
    \bigl\|f(\mathbf{p}+\mathbf{E}) - f(\mathbf{p})\bigr\|
    \;\leq\;
    \frac{O(\varepsilon)\displaystyle\sum_{j,k}\alpha_j\alpha_k\|\mathbf{v}_j\|}
         {A^2\bigl(1+O(\varepsilon)\bigr)}
    \;=\; O(\varepsilon),
\]
where $\sum_{j,k}\alpha_j\alpha_k\|\mathbf{v}_j\|$ is a finite sum of constants, $1/A^2$ is positive and fixed, and $(1+O(\varepsilon))^{-1} \to 1$, so all remaining factors are $O(1)$ in $\varepsilon$.
\end{proof}

%% file: sections/subparts/self_attention_proof.tex
\section{Error Asymptotics For The Discretized Self Attention}
\label{sec:self_attn_proof}
Let $h>0$ be the maximum edge length of the mesh. We prove the the discretiation of self-attention, Section~\ref{sec:self_attn}, leads to an $O(h^2 + \varepsilon)$ error.

Assume $\cquery,\ckey, \cvalue \in C^2(\Omega)$. Further, per the assumption of triangulation-agnosticism of the input discrete quantities, assume the discrete vertex values  $\dquery_i, \dkey_j, \dvalue_j$ approximate the
continuous fields $\cquery(\spoint_i), \ckey(\spoint_j), \cvalue(\spoint_j)$ at the vertices
with error $\varepsilon > 0$. 
\begin{proposition}
$\bigl\|\cfeature(\spoint_i) - \dfeature_i\bigr\| = O(h^2 + \varepsilon)$, as $h\to 0$.
\end{proposition}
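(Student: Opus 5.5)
We split the error through an intermediate quantity: the mass-weighted discrete attention evaluated on the \emph{exact} vertex samples,
\[
\tilde{\dfeature}_i=\frac{\sum_j \mass_{jj}\, e^{\tribrac{\cquery(\spoint_i),\ckey(\spoint_j)}}\cvalue(\spoint_j)}{\sum_k \mass_{kk}\, e^{\tribrac{\cquery(\spoint_i),\ckey(\spoint_k)}}},
\]
and bound $\norm{\cfeature(\spoint_i)-\tilde{\dfeature}_i}=\bigo{h^2}$ (quadrature error) and $\norm{\tilde{\dfeature}_i-\dfeature_i}=\bigo{\varepsilon}$ (perturbation error) separately. The triangle inequality then gives the claim.

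\textbf{Quadrature error.} Fix $s=\spoint_i$ and write $\cfeature(s)=N(s)/D(s)$, with $N(s)=\int e^{\tribrac{\cquery(s),\ckey(t)}}\cvalue(t)\,\text{d}t$ and $D(s)=\int e^{\tribrac{\cquery(s),\ckey(t)}}\,\text{d}t$. Since $\cquery,\ckey,\cvalue\in C^2(\manifold)$ on a compact surface, the integrands $t\mapsto e^{\tribrac{\cquery(s),\ckey(t)}}\cvalue(t)$ and $t\mapsto e^{\tribrac{\cquery(s),\ckey(t)}}$ are $C^2$, with second derivatives bounded uniformly in $s$. We invoke the standard estimate for lumped-mass (vertex) quadrature, $\bigl|\int g-\sum_j\mass_{jj}g(\spoint_j)\bigr|=\bigo{h^2\norm{g}_{C^2}}$. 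This estimate is exactly the error of integrating the piecewise-linear interpolant, since $\mass_{jj}=\int\phi_j$ for the hat functions $\phi_j$. It gives $N_h=N+\bigo{h^2}$ and $D_h=D+\bigo{h^2}$. A subtlety is that the integral is over the mesh rather than over $\manifold$. We treat the mesh as a second-order-accurate approximation of the surface (area/metric error $\bigo{h^2}$ for inscribed meshes) and absorb this into the same term. The ratio step is then elementary. $D(s)\ge |\manifold|\,e^{-\norm{\cquery}_\infty\norm{\ckey}_\infty}>0$ is bounded below uniformly, so for small $h$ the quantity $D_h$ is bounded below as well, and
\[
\frac{N_h}{D_h}-\frac{N}{D}=\frac{(N_h-N)D-N(D_h-D)}{D\,D_h}=\bigo{h^2}.
\]

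\textbf{Perturbation error.} Here we mimic the cross-attention proposition (Appendix~\ref{sec:cross_attn_proof}), but now the keys and values are perturbed too. Each exponent changes by
\[
\tribrac{\dquery_i,\dkey_j}-\tribrac{\cquery(\spoint_i),\ckey(\spoint_j)}=\bigo{\varepsilon},
\]
using the boundedness of $\cquery,\ckey$. Hence each weight satisfies $\dweight_{ij}=e^{\tribrac{\cquery(\spoint_i),\ckey(\spoint_j)}}(1+\bigo{\varepsilon})$, uniformly in $j$. Also $\dvalue_j=\cvalue(\spoint_j)+\bigo{\varepsilon}$. The masses are positive and sum to the area, so the numerator and denominator of $\dfeature_i$ differ from those of $\tilde{\dfeature}_i$ by $\bigo{\varepsilon}$, with constants independent of $n$. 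The same ratio identity, with the denominator bounded below, gives an $\bigo{\varepsilon}$ error.

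\textbf{Main obstacle.} The main difficulty is uniformity in $n$. Unlike the cross-attention case, the number of terms grows as $h\to0$, so the bounds must not sum per-term errors naively. Writing everything as mass-weighted averages, with $\sum_j\mass_{jj}=|\mesh|$ bounded, resolves this. The second point to handle with care is justifying the $\bigo{h^2}$ lumped-quadrature estimate together with the surface-approximation error. We will state the second as an assumption on the mesh (vertices on $\manifold$, shape-regular triangles) and cite standard FEM results rather than re-derive them.
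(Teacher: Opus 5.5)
Your proposal is correct and follows essentially the same route as the paper: both compare against the mass-weighted quadrature on exact vertex samples, bound numerator and denominator via the $\bigo{h^2}$ lumped-quadrature estimate plus an $\bigo{\varepsilon}$ perturbation of the weights and values, and conclude with a ratio bound using the positive lower bound on the continuous denominator. Your version is more careful than the paper's, which leaves implicit both the uniformity in $n$ (your $\sum_j \mass_{jj}=|\mesh|$ argument) and the mesh-versus-$\manifold$ integration error (your shape-regularity assumption).
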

\begin{proof}
 
Let $\dquery_i = \cquery(\spoint_i) + \delta q_i$
and $\dkey_j = \ckey(\spoint_j) + \delta k_j$ with
$\|\delta q_i\|, \|\delta k_j\| \leq \varepsilon$. Hence,
\[
  \langle \dquery_i, \dkey_j \rangle
  = \langle \cquery(\spoint_i), \ckey(\spoint_j) \rangle
  + \underbrace{\langle \cquery(\spoint_i), \delta k_j \rangle
              + \langle \delta q_i, \ckey(\spoint_j) \rangle}_{O(\varepsilon)}
  + \underbrace{\langle \delta q_i, \delta k_j \rangle}_{\bigo{\varepsilon^2}},
\]
where the $\bigo{\varepsilon}$ term dominates since the continuous $\cquery\parr{\cdot}$ and $\ckey\parr{\cdot}$
are bounded. Considering the computed discrete weights at the vertices
$\dweight_{ij} = e^{\langle \dquery_i,\dkey_j\rangle}$ compared with the continuous weights from the continuous functions, evaluated at the vertices $w_{ij} = e^{\langle \cquery(\spoint_i),\, \ckey(\spoint_j)\rangle}$, they satisfy
\[
  \dweight_{ij} = e^{\langle \cquery(\spoint_i), \ckey(\spoint_j) \rangle
  +\bigo{\varepsilon}} = w_{ij}\,e^{O(\varepsilon)} = w_{ij}(1 + O(\varepsilon))
  = w_{ij} + O(\varepsilon),
\]
 We now evaluate the accuracy of our approximation:
\[
  \cfeature(\spoint_i)
  = \frac{\displaystyle\int_\Omega w(\spoint_i,t)\,\cvalue(t)\,\mathrm{d}t}
         {\displaystyle\int_\Omega w(\spoint_i,t)\,\mathrm{d}t},
  \qquad w(\spoint_i,t) = e^{\langle \cquery(\spoint_i),\,\ckey(t)\rangle},
\]
by
\[
  \dfeature_i
  = \frac{\displaystyle\sum_j \mass_{jj}\,\dweight_{ij}\,\dvalue_j}
         {\displaystyle\sum_k \mass_{kk}\,\dweight_{ik}}.
\]

Using the  diagonal entries $\mass_{jj}$ to define a vertex quadrature
rule, standard FEM theory guarantees that for any $g\in C^2(\Omega)$,
\begin{equation}\label{eq:quad}
  \left|\int_\Omega g(t)\,\mathrm{d}t - \sum_j \mass_{jj}\,g(\spoint_j)\right| = O(h^2).
\end{equation}

Since $\dweight_{ij} = w_{ij} + O(\varepsilon)$ and $\dvalue_j =
\cvalue(\spoint_j) + O(\varepsilon)$, we show the numerator and denominator of
$\dfeature_i$ are $O(h^2 + \varepsilon)$-accurate estimates of those
of $\cfeature(\spoint_i)$. Specifically, the integrands
$w(\spoint_i,\cdot)\,\cvalue(\cdot)$ and $w(\spoint_i,\cdot)$ are both in $C^2(\Omega)$
(since $\ckey, \cvalue \in C^2$ and
$w = e^{\langle \cquery(\spoint_i), \ckey(\cdot)\rangle}$),
so~\eqref{eq:quad} together with the $O(\varepsilon)$ perturbation in
the weights and values at the vertices gives
\begin{align*}
   \int_\Omega w(\spoint_i,t)\,\cvalue(t)\,\mathrm{d}t  &=\sum_j \mass_{jj}\,w(\spoint_i,\spoint_j)\,\cvalue(\spoint_j)+ O(h^2)\\&=\sum_j \mass_{jj}\,\dweight_{ij}\,\dvalue_j+ O(h^2 + \varepsilon) 
  ,\\
  \int_\Omega w(\spoint_i,t)\,\mathrm{d}t  &= \sum_j \mass_{jj}\, w(\spoint_i,\spoint_j)
  + O(h^2). \\&=\sum_j \mass_{jj}\,\dweight_{ij}
  + O(h^2 + \varepsilon).
\end{align*}

Since $w(\spoint_i,t) = e^{\langle \cquery(\spoint_i),\ckey(t)\rangle} > 0$
for all $t\in\Omega$, the continuous denominator satisfies
\[
  D_{\min} \triangleq \int_\Omega w(\spoint_i,t)\,\mathrm{d}t
  \;>\; 0.
\]
For $h$ and $\varepsilon$ small enough that the $O(h^2+\varepsilon)$
denominator error is less than $D_{\min}/2$, the discrete denominator
is bounded below by $D_{\min}/2>0$. We finish the proof by noting that a ratio of two quantities each approximated
to accuracy $O(h^2+\varepsilon)$, with denominator bounded away from
zero, is itself approximated to accuracy $O(h^2 + \varepsilon)$.
\end{proof}

%% file: sections/subparts/other_supp_stuff.tex
\subsection{Architecture Details}
\label{sec:architecture}
For all PoissonNet \cite{poissonet} and DiffusionNet \cite{diffusionnet} backends in our experiments, we mimic their architecture exactly with the sole addition of our attention layers. We insert our attention layers (both cross- and self-attention) immediately after their respective blocks, utilizing a skip connection to join the block's output to the output of our layer. Our attention layers use a single attention head. For the cross-attention layer, we set the channel dimension to 128, and for self-attention, we set it to 126 (to ensure divisibility by 3). We utilize xformers \cite{xformer} as our memory-efficient backend. To compute the area-weighted softmax, we first add an epsilon of $1\mathrm{e}{-8}$ to the lumped mass matrix to prevent division by zero, and then add the log of this matrix; this is mathematically equivalent to our area-weighted softmax

\subsection{Dataset Details}
We adopt the train/test splits established by PoissonNet, which provide a diverse range of full-body poses from yoga motions. However, because these lack variations in finger articulations, we supplement the data with poses from GRAB, a dataset of human-object interactions. For the testing set, we utilize poses interacting with the following objects: "mug," "wineglass," "camera," "binoculars," "frying pan," and "toothpaste." To ensure diversity, we subsample the poses from the GRAB dataset using farthest point sampling. Together, the training dataset comprises 21K poses (7K MOYO and 14K GRAB), and the validation set comprises 4K poses (2K MOYO and 2K GRAB).

\subsection{Single Source  Experiment Details}
\subsubsection{Training Details}
\label{single_source_supp}
We utilized all the 21K poses from our dataset. For both the DiffusionNet and PoissonNet backbones our architecture utilizes 5 blocks. Similar to PoissonNet \cite{poissonet}, we utilize an NJF \cite{NJF} head to predict the deformations. All models are trained with the same losses used by PoissonNet for their deformation experiments, alongside an additional per-vertex loss on the hand vertices, weighted by $\lambda = 1$, to encourage the model to learn better articulations. To find the hand vertices, we use a segmentation of the SMPL mesh obtained from \cite{poissonet}. All models are trained for 200k iterations with a batch size of 16 on a single NVIDIA L40S GPU; we optimize the network using Adam with a constant learning rate of 0.005 and report the results of the best-performing model. 
\subsection{Random Source  Experiment Details}
\subsubsection{Training Details}
We sample 64K random source-target pairs from our dataset and add a shape variation with a standard deviation of 5 to the SMPL shape parameters. As in the previous experiment, all models are trained with an NJF head, and with the same losses used by PoissonNet alongside the hand vertex loss weighted by $\lambda = 1$. All models are trained for 200k iterations with a batch size of 16 on a single NVIDIA L40S GPU; we optimize the network using Adam with a constant learning rate of 0.005 and report the results of the best-performing model. 

\subsection{High Frequency Intrinsic Feature Experiment Details}
\subsubsection{Training Details }
We utilized all the 21K poses from our dataset and added a shape variation with a standard deviation of 5 to the SMPL shape parameters. Our model architecture utilizes 3 PoissonNet blocks. As in the previous experiments, all models are trained for 200k iterations with a batch size of 16 on a single NVIDIA L40S GPU, we optimize the network using Adam with a constant learning rate of 0.005, and report the results of the best performing model. For our loss function, we use a simple per-vertex Mean Squared Error (MSE) loss on the first 64 ground truth eigenfunctions for the T-pose SMPL.

\subsubsection{Baseline Details}
Here, we present details of baselines used in our experiments. 

\paragraph{PTV3:} For PTV3 \cite{PTV3}, we train their original encoder with $46\mathrm{M}$ parameters for 200K iterations and report the best results. We experiment with both flat learning rates and the cyclic learning rate schedule adopted from their ScanNet experiments, utilizing the official Pointcept repository \cite{pointcept}. We train variants of their method: one utilizing only vertex positions, and another utilizing both surface normals and vertex positions. We find that normals perform better as they provide the network with essential shape information. As demonstrated in our experiments (Sec. \ref{sec:pose-encoding}), their model overfits to the training discretization and completely fails outside of it. To curb this overfitting, we modified their model to reduce the parameter size to $5\mathrm{M}$, but found that it fails to converge even on the original discretization.

\paragraph{Hodgeformers:} We use the publicly available implementation provided by the authors and train it on our dataset. Since the method does not natively support batch sizes greater than 1, we use gradient accumulation to achieve equivalent effective batch sizes of 4 and 8. We experiment with both the learning rate settings reported for their segmentation experiments and the same learning rate used for our method. However, in both cases, the model fails to converge to satisfactory results.

\subsection{Correspondence Experiments Details}
\subsubsection{Training Details}
For this experiment, we train our model from Sec. \ref{sec:pose-encoding} with rotation augmentations uniformly sampled from $\mathrm{SO}(3)$ space and random face deletion. Specifically, we uniformly drop faces with a ratio between 0.0 and 0.2. If this random removal splits the mesh into multiple disconnected components, the entire sample is discarded. This augmentation strategy makes the network more robust to noisy data and applicable to partial meshes. We do not train on any other correspondence data or any datasets from \cite{diffumatch}.

\subsubsection{Comparison Details:}
We use the pretrained models provided by \cite{diffumatch} and are able to reproduce nearly all of their reported results. For evaluation, we use their official evaluation script and replace the correspondence maps with those computed by our method described in Sec ~\ref{sec:correspondance}.Furthermore, we observe a substantial difference in inference speed. DiffuMatch relies on Score Distillation Sampling (SDS), which is computationally expensive and takes approximately 2 minutes per sample on their evaluation dataset. In contrast, our method is highly efficient, requiring only 0.2 seconds per sample

\subsubsection{Partial Matching:}
To evaluate the robustness of our network on partial meshes, we randomly remove faces from the test dataset used in Section~\ref{sec:pose-encoding}, where the removal ratio is uniformly sampled between $0.0$ and $0.3$. We then measure the average geodesic error on both the original full meshes and the resulting partial meshes. We observe only a small degradation in performance, with the error changing from $0.5579$ on full meshes to $0.7533$ on partial meshes.



%% file: references.bib
@inproceedings{flashattention,
  title={Flash{A}ttention: Fast and Memory-Efficient Exact Attention with {IO}-Awareness},
  author={Dao, Tri and Fu, Daniel Y. and Ermon, Stefano and Rudra, Atri and R{\'e}, Christopher},
  booktitle={Advances in Neural Information Processing Systems (NeurIPS)},
  year={2022}
}

@misc{utonia,
      title={Utonia: Toward One Encoder for All Point Clouds}, 
      author={Yujia Zhang and Xiaoyang Wu and Yunhan Yang and Xianzhe Fan and Han Li and Yuechen Zhang and Zehao Huang and Naiyan Wang and Hengshuang Zhao},
      year={2026},
      eprint={2603.03283},
      archivePrefix={arXiv},
      primaryClass={cs.CV},
      url={https://arxiv.org/abs/2603.03283}, 
}

@inproceedings{sa3df,
    author = {Uzolas, Lukas and Eisemann, Elmar and Kellnhofer, Petr},
    title = {Surface-Aware Distilled 3D Semantic Features},
    year = {2025},
    isbn = {9798400721373},
    publisher = {Association for Computing Machinery},
    address = {New York, NY, USA},
    url = {https://doi.org/10.1145/3757377.3763974},
    doi = {10.1145/3757377.3763974},
    booktitle = {Proceedings of the SIGGRAPH Asia 2025 Conference Papers},
    articleno = {3},
    numpages = {12},
    location = {
    },
    series = {SA Conference Papers '25}
}

@inproceedings{moyo,
    title = {{3D} Human Pose Estimation via Intuitive Physics},
    author = {Tripathi, Shashank and M{\"u}ller, Lea and Huang, Chun-Hao P. and Taheri Omid and Black, Michael J. and Tzionas, Dimitrios},
    booktitle = {Conference on Computer Vision and Pattern Recognition ({CVPR})},
    pages = {4713--4725},
    year = {2023},
    url = {https://ipman.is.tue.mpg.de}
}

@inproceedings{GRAB:2020,
  title = {{GRAB}: A Dataset of Whole-Body Human Grasping of Objects},
  author = {Taheri, Omid and Ghorbani, Nima and Black, Michael J. and Tzionas, Dimitrios},
  booktitle = {European Conference on Computer Vision (ECCV)},
  year = {2020},
  url = {https://grab.is.tue.mpg.de}
}

@inproceedings{attention_original,
author = {Vaswani, Ashish and Shazeer, Noam and Parmar, Niki and Uszkoreit, Jakob and Jones, Llion and Gomez, Aidan N. and Kaiser, \L{}ukasz and Polosukhin, Illia},
title = {Attention is all you need},
year = {2017},
isbn = {9781510860964},
publisher = {Curran Associates Inc.},
address = {Red Hook, NY, USA},
booktitle = {Proceedings of the 31st International Conference on Neural Information Processing Systems},
pages = {6000–6010},
numpages = {11},
location = {Long Beach, California, USA},
series = {NIPS'17}
}

@inproceedings{dino,
  title={Emerging Properties in Self-Supervised Vision Transformers},
  author={Caron, Mathilde and Touvron, Hugo and Misra, Ishan and J\'egou, Herv\'e  and Mairal, Julien and Bojanowski, Piotr and Joulin, Armand},
  booktitle={Proceedings of the International Conference on Computer Vision (ICCV)},
  year={2021}
  }

@article{verma2021audio,
  title={Audio transformers},
  author={Verma, Prateek and Berger, Jonathan},
  journal={arXiv preprint arXiv:2105.00335},
  year={2021}
}

@inproceedings{PTV3,
    title={Point Transformer V3: Simpler, Faster, Stronger},
    author={Wu, Xiaoyang and Jiang, Li and Wang, Peng-Shuai and Liu, Zhijian and Liu, Xihui and Qiao, Yu and Ouyang, Wanli and He, Tong and Zhao, Hengshuang},
    booktitle={CVPR},
    year={2024}
}

@article{diffusionnet,
  title={DiffusionNet: Discretization Agnostic Learning on Surfaces},
  author={Nicholas Sharp and Souhaib Attaiki and Keenan Crane and Maks Ovsjanikov},
  journal={ACM Transactions on Graphics (TOG)},
  year={2020},
  volume={41},
  pages={1 - 16},
  url={https://api.semanticscholar.org/CorpusID:233880804}
}

@article{poissonet,
author = {Maesumi, Arman and Makadia, Tanish and Groueix, Thibault and Kim, Vladimir and Ritchie, Daniel and Aigerman, Noam},
title = {PoissonNet: A Local-Global Approach for Learning on Surfaces},
year = {2025},
issue_date = {December 2025},
publisher = {Association for Computing Machinery},
address = {New York, NY, USA},
volume = {44},
number = {6},
issn = {0730-0301},
url = {https://doi.org/10.1145/3763298},
doi = {10.1145/3763298},
journal = {ACM Trans. Graph.},
month = dec,
articleno = {175},
numpages = {16}
}

@article{hodgeformer,
  author    = {Nousias, Akis and Nousias, Stavros},
  title     = {HodgeFormer: Transformers for Learnable Operators on Triangular Meshes through Data-Driven Hodge Matrices},
  journal   = {arXiv preprint},
  year      = {2025}
}

@inproceedings{galerkin,
  author        = {Shuhao Cao},
  title         = {Choose a Transformer: {F}ourier or {G}alerkin},
  booktitle     = {Advances in Neural Information Processing Systems (NeurIPS 2021)},
  volume        = {34},
  year          = {2021},
  eprint        = {arXiv: 2105.14995},
  primaryclass  = {cs.CL},
  url={https://openreview.net/forum?id=ssohLcmn4-r},
}

@article{diffumatch,
  title={DiffuMatch: Category-Agnostic Spectral Diffusion Priors for Robust Non-rigid Shape Matching},
  author={Pierson, Emery and Li, Lei and Dai, Angela and Ovsjanikov, Maks},
  journal={arXiv preprint arXiv:2507.23715},
  year={2025}
}

@inproceedings{gpt,
  title={Improving Language Understanding by Generative Pre-Training},
  author={Alec Radford and Karthik Narasimhan},
  year={2018},
  url={https://api.semanticscholar.org/CorpusID:49313245}
}

@misc{dinov3,
  title={{DINOv3}},
  author={Sim{\'e}oni, Oriane and Vo, Huy V. and Seitzer, Maximilian and Baldassarre, Federico and Oquab, Maxime and Jose, Cijo and Khalidov, Vasil and Szafraniec, Marc and Yi, Seungeun and Ramamonjisoa, Micha{\"e}l and Massa, Francisco and Haziza, Daniel and Wehrstedt, Luca and Wang, Jianyuan and Darcet, Timoth{\'e}e and Moutakanni, Th{\'e}o and Sentana, Leonel and Roberts, Claire and Vedaldi, Andrea and Tolan, Jamie and Brandt, John and Couprie, Camille and Mairal, Julien and J{\'e}gou, Herv{\'e} and Labatut, Patrick and Bojanowski, Piotr},
  year={2025},
  eprint={2508.10104},
  archivePrefix={arXiv},
  primaryClass={cs.CV},
  url={https://arxiv.org/abs/2508.10104},
}

@article{vit,
  title={An Image is Worth 16x16 Words: Transformers for Image Recognition at Scale},
  author={Dosovitskiy, Alexey and Beyer, Lucas and Kolesnikov, Alexander and Weissenborn, Dirk and Zhai, Xiaohua and Unterthiner, Thomas and  Dehghani, Mostafa and Minderer, Matthias and Heigold, Georg and Gelly, Sylvain and Uszkoreit, Jakob and Houlsby, Neil},
  journal={ICLR},
  year={2021}
}

@inproceedings{clip,
  title={Learning Transferable Visual Models From Natural Language Supervision},
  author={Alec Radford and Jong Wook Kim and Chris Hallacy and Aditya Ramesh and Gabriel Goh and Sandhini Agarwal and Girish Sastry and Amanda Askell and Pamela Mishkin and Jack Clark and Gretchen Krueger and Ilya Sutskever},
  booktitle={International Conference on Machine Learning},
  year={2021},
  url={https://api.semanticscholar.org/CorpusID:231591445}
}

@article{LDM,
  title={High-Resolution Image Synthesis with Latent Diffusion Models},
  author={Robin Rombach and A. Blattmann and Dominik Lorenz and Patrick Esser and Bj{\"o}rn Ommer},
  journal={2022 IEEE/CVF Conference on Computer Vision and Pattern Recognition (CVPR)},
  year={2021},
  pages={10674-10685},
  url={https://api.semanticscholar.org/CorpusID:245335280}
}

@article{multimodal2,
  title={ViCA: Efficient Multimodal LLMs with Vision-Only Cross-Attention},
  author={Wenjie Liu and Hao Wu and Xin Qiu and Yingqi Fan and Yihang Zhang and Anhao Zhao and Yunpu Ma and Xiaoyu Shen},
  journal={ArXiv},
  year={2026},
  volume={abs/2602.07574},
  url={https://api.semanticscholar.org/CorpusID:285452725}
}

@unknown{xformer,
author = {Rabe, Markus and Staats, Charles},
year = {2021},
month = {12},
pages = {},
title = {Self-attention Does Not Need $O(n^2)$ Memory},
doi = {10.48550/arXiv.2112.05682}
}

@inproceedings{mamba,
  title={Transformers are {SSM}s: Generalized Models and Efficient Algorithms Through Structured State Space Duality},
  author={Dao, Tri and Gu, Albert},
  booktitle={International Conference on Machine Learning (ICML)},
  year={2024}
}

@inproceedings{LinearAttention,
  title={Transformers are RNNs: Fast Autoregressive Transformers with Linear Attention},
  author={Angelos Katharopoulos and Apoorv Vyas and Nikolaos Pappas and Franccois Fleuret},
  booktitle={International Conference on Machine Learning},
  year={2020},
  url={https://api.semanticscholar.org/CorpusID:220250819}
}

@inproceedings{Qwen,
  title={Qwen3 Technical Report},
  author={An Yang and Anfeng Li and Baosong Yang and Beichen Zhang and Binyuan Hui and Bo Zheng and Bowen Yu and Chang Gao and Chengen Huang and Chenxu Lv and Chujie Zheng and Dayiheng Liu and Fan Zhou and Fei Huang and Feng Hu and Hao Ge and Haoran Wei and Huan Lin and Jialong Tang and Jian Yang and Jianhong Tu and Jianwei Zhang and Jianxin Yang and Jiaxin Yang and Jingren Zhou and Jingren Zhou and Junyan Lin and Kai Dang and Keqin Bao and Ke‐Pei Yang and Le Yu and Li-Chun Deng and Mei Li and Min Xue and Mingze Li and Pei Zhang and Peng Wang and Qin Zhu and Rui Men and Ruize Gao and Shi-Qiang Liu and Shuang Luo and Tianhao Li and Tianyi Tang and Wenbiao Yin and Xingzhang Ren and Xinyu Wang and Xinyu Zhang and Xuancheng Ren and Yang Fan and Yang Su and Yi-Chao Zhang and Yinger Zhang and Yu Wan and Yuqiong Liu and Zekun Wang and Zeyu Cui and Zhenru Zhang and Zhipeng Zhou and Zihan Qiu},
  year={2025},
  booktitle={},
  url={https://api.semanticscholar.org/CorpusID:278602855}
}

@inproceedings{jamba,
title={Jamba: Hybrid Transformer-Mamba Language Models},
author={Barak Lenz and Opher Lieber and Alan Arazi and Amir Bergman and Avshalom Manevich and Barak Peleg and Ben Aviram and Chen Almagor and Clara Fridman and Dan Padnos and Daniel Gissin and Daniel Jannai and Dor Muhlgay and Dor Zimberg and Edden M. Gerber and Elad Dolev and Eran Krakovsky and Erez Safahi and Erez Schwartz and Gal Cohen and Gal Shachaf and Haim Rozenblum and Hofit Bata and Ido Blass and Inbal Magar and Itay Dalmedigos and Jhonathan Osin and Julie Fadlon and Maria Rozman and Matan Danos and Michael Gokhman and Mor Zusman and Naama Gidron and Nir Ratner and Noam Gat and Noam Rozen and Oded Fried and Ohad Leshno and Omer Antverg and Omri Abend and Or Dagan and Orit Cohavi and Raz Alon and Ro'i Belson and Roi Cohen and Rom Gilad and Roman Glozman and Shahar Lev and Shai Shalev-Shwartz and Shaked Haim Meirom and Tal Delbari and Tal Ness and Tomer Asida and Tom Ben Gal and Tom Braude and Uriya Pumerantz and Josh Cohen and Yonatan Belinkov and Yuval Globerson and Yuval Peleg Levy and Yoav Shoham},
booktitle={The Thirteenth International Conference on Learning Representations},
year={2025},
url={https://openreview.net/forum?id=JFPaD7lpBD}
}

@inproceedings{PTV2,
  title     = {Point transformer V2: Grouped Vector Attention and Partition-based Pooling},
  author    = {Wu, Xiaoyang and Lao, Yixing and Jiang, Li and Liu, Xihui and Zhao, Hengshuang},
  booktitle = {NeurIPS},
  year      = {2022}
}

@inproceedings{litept,
    title={{LitePT: Lighter Yet Stronger Point Transformer}},
    author={Yue, Yuanwen and Robert, Damien and Wang, Jianyuan and Hong, Sunghwan and Wegner, Jan Dirk and Rupprecht, Christian and Schindler, Konrad},
    booktitle={IEEE/CVF Conference on Computer Vision and Pattern Recognition (CVPR)},
    year={2026}
}

@article{PT,
  title={Point Transformer},
  author={Hengshuang Zhao and Li Jiang and Jiaya Jia and Philip H. S. Torr and Vladlen Koltun},
  journal={2021 IEEE/CVF International Conference on Computer Vision (ICCV)},
  year={2020},
  pages={16239-16248},
  url={https://api.semanticscholar.org/CorpusID:229220595}
}

@article{meshformer,
  title={MeshFormer: High-Quality Mesh Generation with 3D-Guided Reconstruction Model},
  author={Minghua Liu and Chong Zeng and Xinyue Wei and Ruoxi Shi and Linghao Chen and Chao Xu and Mengqi Zhang and Zhaoning Wang and Xiaoshuai Zhang and Isabella Liu and Hongzhi Wu and Hao Su},
  journal={arXiv preprint arXiv:2408.10198},
  year={2024}
}

@article{VoxFormer,
  title={VoxFormer: Sparse Voxel Transformer for Camera-Based 3D Semantic Scene Completion},
  author={Yiming Li and Zhiding Yu and Christopher Bongsoo Choy and Chaowei Xiao and Jos{\'e} Manuel {\'A}lvarez and Sanja Fidler and Chen Feng and Anima Anandkumar},
  journal={2023 IEEE/CVF Conference on Computer Vision and Pattern Recognition (CVPR)},
  year={2023},
  pages={9087-9098},
  url={https://api.semanticscholar.org/CorpusID:257102923}
}

@InProceedings{METRO2,
    title={Cross-Attention of Disentangled Modalities for 3D Human Mesh Recovery with Transformers},
    author={Junhyeong Cho and Kim Youwang and Tae-Hyun Oh},
    booktitle={European Conference on Computer Vision (ECCV)},
    year={2022}
}

@inproceedings{metro1,
author = {Lin, Kevin and Wang, Lijuan and Liu, Zicheng},
title = {End-to-End Human Pose and Mesh Reconstruction with Transformers},
booktitle = {CVPR},
year = {2021},
}

@inproceedings{meshmae,
  title={MeshMAE: Masked Autoencoders for 3D Mesh Data Analysis},
  author={Liang, Yaqian and Zhao, Shanshan and Yu, Baosheng and Zhang, Jing and He, Fazhi},
  booktitle={European Conference on Computer Vision},
  year={2022},
}

@inproceedings{meshmamba,
  title={Mesh Mamba: A Unified State Space Model for Saliency Prediction in Non-Textured and Textured Meshes},
  author={Zhang, Kaiwei and Zhu, Dandan and Min, Xiongkuo and Zhai, Guangtao},
  booktitle={Proceedings of the Computer Vision and Pattern Recognition Conference},
  pages={16219--16228},
  year={2025}
}

@article{meshgraphtransformer,
author = {Vecchio, Giuseppe and Prezzavento, Luca and Pino, Carmelo and Rundo, Francesco and Palazzo, Simone and Spampinato, Concetto},
title = { MeT: A graph transformer for semantic segmentation of 3D meshes},
year = {2023},
issue_date = {Oct 2023},
publisher = {Elsevier Science Inc.},
address = {USA},
volume = {235},
number = {C},
issn = {1077-3142},
url = {https://doi.org/10.1016/j.cviu.2023.103773},
doi = {10.1016/j.cviu.2023.103773},
journal = {Comput. Vis. Image Underst.},
month = oct,
numpages = {9}
}

@inproceedings{laplacianmeshtransformer,
author = {Li, Xiao-Juan and Yang, Jie and Zhang, Fang-Lue},
title = {Laplacian Mesh Transformer: Dual Attention and Topology Aware Network for 3D Mesh Classification and Segmentation},
year = {2022},
isbn = {978-3-031-19817-5},
publisher = {Springer-Verlag},
address = {Berlin, Heidelberg},
url = {https://doi.org/10.1007/978-3-031-19818-2_31},
doi = {10.1007/978-3-031-19818-2_31},
booktitle = {Computer Vision – ECCV 2022: 17th European Conference, Tel Aviv, Israel, October 23–27, 2022, Proceedings, Part XXIX},
pages = {541–560},
numpages = {20},
location = {Tel Aviv, Israel}
}

@article{recipe,
  title={A Recipe for Geometry-Aware 3D Mesh Transformers},
  author={Mohammad Farazi and Yalin Wang},
  journal={2025 IEEE/CVF Winter Conference on Applications of Computer Vision (WACV)},
  year={2024},
  pages={3290-3300},
  url={https://api.semanticscholar.org/CorpusID:273798014}
}

@article{neuraloperator,
   author    = {Nikola Kovachki and
                  Zongyi Li and
                  Burigede Liu and
                  Kamyar Azizzadenesheli and
                  Kaushik Bhattacharya and
                  Andrew Stuart and
                  Anima Anandkumar},
   title     = {Neural Operator: Learning Maps Between Function Spaces with Applications to PDEs},
   journal   = {JMLR},
   volume    = {24},
   number    = {1},
   articleno = {89},
   numpages  = {97},
   year      = {2023},
}

@inproceedings{Transolve,
  title={Transolver: A Fast Transformer Solver for PDEs on General Geometries},
  author={Haixu Wu and Huakun Luo and Haowen Wang and Jianmin Wang and Mingsheng Long},
  booktitle={International Conference on Machine Learning},
  year={2024},
  url={https://api.semanticscholar.org/CorpusID:267411758}
}

@inproceedings{sun2024tuttenet,
  title={TutteNet: Injective 3D Deformations by Composition of 2D Mesh Deformations},
  author={Sun, Bo and Groueix, Thibault and Song, Chen and Huang, Qixing and Aigerman, Noam},
  booktitle={Proceedings of the IEEE/CVF Conference on Computer Vision and Pattern Recognition},
  pages={21378--21389},
  year={2024}
}

@article{meshcnn_hanocka_2019,
author = {Hanocka, Rana and Hertz, Amir and Fish, Noa and Giryes, Raja and Fleishman, Shachar and Cohen-Or, Daniel},
title = {MeshCNN: a network with an edge},
year = {2019},
issue_date = {August 2019},
publisher = {Association for Computing Machinery},
address = {New York, NY, USA},
volume = {38},
number = {4},
issn = {0730-0301},
url = {https://doi.org/10.1145/3306346.3322959},
doi = {10.1145/3306346.3322959},
journal = {ACM Trans. Graph.},
month = jul,
articleno = {90},
numpages = {12}
}

@article{hodgenet_smirnov_2021,
author = {Smirnov, Dmitriy and Solomon, Justin},
title = {HodgeNet: learning spectral geometry on triangle meshes},
year = {2021},
issue_date = {August 2021},
publisher = {Association for Computing Machinery},
address = {New York, NY, USA},
volume = {40},
number = {4},
issn = {0730-0301},
url = {https://doi.org/10.1145/3450626.3459797},
doi = {10.1145/3450626.3459797},
journal = {ACM Trans. Graph.},
month = jul,
articleno = {166},
numpages = {11}
}

@inproceedings{attaiki2021dpfm,
  title={Dpfm: Deep partial functional maps},
  author={Attaiki, Souhaib and Pai, Gautam and Ovsjanikov, Maks},
  booktitle={2021 International Conference on 3D Vision (3DV)},
  pages={175--185},
  year={2021},
  organization={IEEE}
}

@inproceedings{donati2020deep,
  title={Deep geometric functional maps: Robust feature learning for shape correspondence},
  author={Donati, Nicolas and Sharma, Abhishek and Ovsjanikov, Maks},
  booktitle={Proceedings of the IEEE/CVF Conference on Computer Vision and Pattern Recognition},
  pages={8592--8601},
  year={2020}
}

@inproceedings{roufosse2019unsupervised,
  title={Unsupervised deep learning for structured shape matching},
  author={Roufosse, Jean-Michel and Sharma, Abhishek and Ovsjanikov, Maks},
  booktitle={Proceedings of the IEEE/CVF International Conference on Computer Vision},
  pages={1617--1627},
  year={2019}
}

@inproceedings{halimi2019unsupervised,
  title={Unsupervised learning of dense shape correspondence},
  author={Halimi, Oshri and Litany, Or and Rodola, Emanuele and Bronstein, Alex M and Kimmel, Ron},
  booktitle={Proceedings of the IEEE/CVF Conference on Computer Vision and Pattern Recognition},
  pages={4370--4379},
  year={2019}
}

@inproceedings{geodesicConvo_masci_2015,
  title={Geodesic convolutional neural networks on riemannian manifolds},
  author={Masci, Jonathan and Boscaini, Davide and Bronstein, Michael and Vandergheynst, Pierre},
  booktitle={Proceedings of the IEEE international conference on computer vision workshops},
  pages={37--45},
  year={2015}
}

@inproceedings{he2020curvanet,
  title={CurvaNet: Geometric deep learning based on directional curvature for 3D shape analysis},
  author={He, Wenchong and Jiang, Zhe and Zhang, Chengming and Sainju, Arpan Man},
  booktitle={Proceedings of the 26th ACM SIGKDD International Conference on Knowledge Discovery \& Data Mining},
  pages={2214--2224},
  year={2020}
}

@article{de2020gauge,
  title={Gauge equivariant mesh cnns: Anisotropic convolutions on geometric graphs},
  author={De Haan, Pim and Weiler, Maurice and Cohen, Taco and Welling, Max},
  journal={arXiv preprint arXiv:2003.05425},
  year={2020}
}

@InProceedings{Litany_2017_ICCV,
author = {Litany, Or and Remez, Tal and Rodola, Emanuele and Bronstein, Alex and Bronstein, Michael},
title = {Deep Functional Maps: Structured Prediction for Dense Shape Correspondence},
booktitle = {Proceedings of the IEEE International Conference on Computer Vision (ICCV)},
month = {Oct},
year = {2017}
}

@InProceedings{Yi_2017_CVPR,
author = {Yi, Li and Su, Hao and Guo, Xingwen and Guibas, Leonidas J.},
title = {SyncSpecCNN: Synchronized Spectral CNN for 3D Shape Segmentation},
booktitle = {Proceedings of the IEEE Conference on Computer Vision and Pattern Recognition (CVPR)},
month = {July},
year = {2017}
}

@article{ovsjanikov2012functional,
  title={Functional maps: a flexible representation of maps between shapes},
  author={Ovsjanikov, Maks and Ben-Chen, Mirela and Solomon, Justin and Butscher, Adrian and Guibas, Leonidas},
  journal={ACM Transactions on Graphics (ToG)},
  volume={31},
  number={4},
  pages={1--11},
  year={2012},
  publisher={ACM New York, NY, USA}
}

@inproceedings{monti2017geometric,
  title={Geometric deep learning on graphs and manifolds using mixture model cnns},
  author={Monti, Federico and Boscaini, Davide and Masci, Jonathan and Rodola, Emanuele and Svoboda, Jan and Bronstein, Michael M},
  booktitle={Proceedings of the IEEE conference on computer vision and pattern recognition},
  pages={5115--5124},
  year={2017}
}

@inproceedings{fey2018splinecnn,
  title={Splinecnn: Fast geometric deep learning with continuous b-spline kernels},
  author={Fey, Matthias and Lenssen, Jan Eric and Weichert, Frank and M{\"u}ller, Heinrich},
  booktitle={Proceedings of the IEEE conference on computer vision and pattern recognition},
  pages={869--877},
  year={2018}
}

@article{boscaini2016learning,
  title={Learning shape correspondence with anisotropic convolutional neural networks},
  author={Boscaini, Davide and Masci, Jonathan and Rodol{\`a}, Emanuele and Bronstein, Michael},
  journal={Advances in neural information processing systems},
  volume={29},
  year={2016}
}

@article{bronstein2017geometric,
  title={Geometric deep learning: going beyond euclidean data},
  author={Bronstein, Michael M and Bruna, Joan and LeCun, Yann and Szlam, Arthur and Vandergheynst, Pierre},
  journal={IEEE Signal Processing Magazine},
  volume={34},
  number={4},
  pages={18--42},
  year={2017},
  publisher={IEEE}
}

@inproceedings{dynamic_simonovsky_2017,
  title={Dynamic edge-conditioned filters in convolutional neural networks on graphs},
  author={Simonovsky, Martin and Komodakis, Nikos},
  booktitle={Proceedings of the IEEE conference on computer vision and pattern recognition},
  pages={3693--3702},
  year={2017}
}

@inproceedings{SMPL-X_pavlakos_2019,
  title = {Expressive Body Capture: {3D} Hands, Face, and Body from a Single Image},
  author = {Pavlakos, Georgios and Choutas, Vasileios and Ghorbani, Nima and Bolkart, Timo and Osman, Ahmed A. A. and Tzionas, Dimitrios and Black, Michael J.},
  booktitle = {Proceedings IEEE Conf. on Computer Vision and Pattern Recognition (CVPR)},
  pages     = {10975--10985},
  year = {2019}
}

@article{meshwalker_lahav_2020,
  title={Meshwalker: Deep mesh understanding by random walks},
  author={Lahav, Alon and Tal, Ayellet},
  journal={ACM Transactions on Graphics (TOG)},
  volume={39},
  number={6},
  pages={1--13},
  year={2020},
  publisher={ACM New York, NY, USA}
}

@article{fieldconvo_mitchel_2021,
    author    = {Mitchel, Thomas W. and Kim, Vladimir G. and Kazhdan, Michael},
    title     = {Field Convolutions for Surface CNNs},
    booktitle = {Proceedings of the IEEE/CVF International Conference on Computer Vision (ICCV)},
    month     = {October},
    year      = {2021},
    pages     = {10001-10011}
}

@article{HSN_wiersma_2020,
  title={CNNs on surfaces using rotation-equivariant features},
  author={Wiersma, Ruben and Eisemann, Elmar and Hildebrandt, Klaus},
  journal={ACM Transactions on Graphics (ToG)},
  volume={39},
  number={4},
  year={2020},
  publisher={ACM New York, NY, USA}
}

@inproceedings{cgconv_Yang_2021,
author = {Yang, Zhangsihao and Litany, Or and Birdal, Tolga and Sridhar, Srinath and Guibas, Leonidas},
year = {2021},
month = {01},
pages = {134-144},
title = {Continuous Geodesic Convolutions for Learning on 3D Shapes},
doi = {10.1109/WACV48630.2021.00018}
}

@article{mdgcnn_poulenard_2018,
author = {Poulenard, Adrien and Ovsjanikov, Maks},
title = {Multi-directional geodesic neural networks via equivariant convolution},
year = {2018},
issue_date = {December 2018},
publisher = {Association for Computing Machinery},
address = {New York, NY, USA},
volume = {37},
number = {6},
issn = {0730-0301},
url = {https://doi.org/10.1145/3272127.3275102},
doi = {10.1145/3272127.3275102},
journal = {ACM Trans. Graph.},
month = dec,
articleno = {236},
numpages = {14}
}

@article{deltaconv,
  title={Deltaconv: anisotropic operators for geometric deep learning on point clouds},
  author={Wiersma, Ruben and Nasikun, Ahmad and Eisemann, Elmar and Hildebrandt, Klaus},
  journal={ACM Transactions on Graphics (TOG)},
  volume={41},
  number={4},
  pages={1--10},
  year={2022},
  publisher={ACM New York, NY, USA}
}

@article{temporalNJF,
  title={Temporal Residual Jacobians for Rig-Free Motion Transfer},
  author={Muralikrishnan, Sanjeev and Dutt, Niladri and Chaudhuri, Siddhartha and Aigerman, Noam and Kim, Vladimir and Fisher, Matthew and Mitra, Niloy J},
  booktitle={European Conference on Computer Vision},
  pages={93--109},
  year={2024},
  organization={Springer}
}

@article{NJF,
  title={Neural jacobian fields: Learning intrinsic mappings of arbitrary meshes},
  author={Aigerman, Noam and Gupta, Kunal and Kim, Vladimir G and Chaudhuri, Siddhartha and Saito, Jun and Groueix, Thibault},
  journal={SIGGRAPH},
  year={2022}
}

@article{Gao23,
title = {TextDeformer: Geometry Manipulation using Text Guidance},
author = {William Gao and Noam Aigerman and Thibault Groueix and Vladimir G. Kim and Rana Hanocka},
year = {2023},
journal = {SIGGRAPH (Conference track)}}

@article{Kim25,
title = {MeshUp: Multi-Target Mesh Deformation via Blended Score Distillation},
author = {Hyunwoo Kim and Itai Lang and Thibault Groueix and Noam Aigerman and Vladimir G. Kim and Rana Hanocka},
year = {2025},
journal = {3DV}}

@article{Yoo24,
title = {As-Plausible-As-Possible: Plausibility-Aware Mesh Deformation Using 2D Diffusion Priors},
author = {Seungwoo Yoo and Kunho Kim and Vladimir G. Kim and Minhyuk Sung},
year = {2024},
journal = {CVPR}}

@INPROCEEDINGS{Lipman04,
  author={Lipman, Y. and Sorkine, O. and Cohen-Or, D. and Levin, D. and Rossi, C. and Seidel, H.P.},
  booktitle={Proceedings Shape Modeling Applications, 2004.}, 
  title={Differential coordinates for interactive mesh editing}, 
  year={2004},
  pages={181-190},
}

@inproceedings{Sorkine04,
author = {Sorkine, O. and Cohen-Or, D. and Lipman, Y. and Alexa, M. and R\"{o}ssl, C. and Seidel, H.-P.},
title = {Laplacian surface editing},
year = {2004},
booktitle = {Proceedings of the 2004 Eurographics/ACM SIGGRAPH Symposium on Geometry Processing},
pages = {175–184},
numpages = {10},
location = {Nice, France},
series = {SGP '04}
}

@inproceedings{Yu04,
author = {Yu, Yizhou and Zhou, Kun and Xu, Dong and Shi, Xiaohan and Bao, Hujun and Guo, Baining and Shum, Heung-Yeung},
title = {Mesh editing with poisson-based gradient field manipulation},
year = {2004},
booktitle = {ACM SIGGRAPH 2004 Papers},
pages = {644–651},
numpages = {8}}

@article{sumner2004deformation,
  title={Deformation transfer for triangle meshes},
  author={Sumner, Robert W and Popovi{\'c}, Jovan},
  journal={ACM Transactions on graphics (TOG)},
  volume={23},
  number={3},
  pages={399--405},
  year={2004},
  publisher={ACM New York, NY, USA}
}

@inproceedings{bogo2014faust,
  title = {{FAUST}: Dataset and evaluation for {3D} mesh registration},
  author = {Bogo, Federica and Romero, Javier and Loper, Matthew and Black, Michael J.},
  booktitle = {CVPR},
  year = {2014}
}

@article{gao2019sdm,
  title={SDM-NET: Deep generative network for structured deformable mesh},
  author={Gao, Lin and Yang, Jie and Wu, Tong and Yuan, Yu-Jie and Fu, Hongbo and Lai, Yu-Kun and Zhang, Hao},
  journal={ACM Transactions on Graphics (TOG)},
  volume={38},
  number={6},
  pages={1--15},
  year={2019},
  publisher={ACM New York, NY, USA}
}

@inproceedings{skinningcourse:2014,
  author = {Alec Jacobson and Zhigang Deng and Ladislav Kavan and JP Lewis},
  title = {Skinning: Real-time Shape Deformation},
  booktitle = {ACM SIGGRAPH 2014 Courses},
  year = {2014},
}

@article{Fulton:LSD:2018,
  title ={Latent-space Dynamics for Reduced Deformable Simulation},
  author = {Lawson Fulton and Vismay Modi and David Duvenaud and David I. W. Levin and Alec Jacobson},
  year = {2019},
  journal = {Computer Graphics Forum}
}

@article{jacobson2011bounded,
  title={Bounded biharmonic weights for real-time deformation.},
  author={Jacobson, Alec and Baran, Ilya and Popovic, Jovan and Sorkine, Olga},
  journal={ACM Trans. Graph.},
  volume={30},
  number={4},
  pages={78},
  year={2011},
  publisher={Citeseer}
}

@article{kavan2008geometric,
  title={Geometric skinning with approximate dual quaternion blending},
  author={Kavan, Ladislav and Collins, Steven and {\v{Z}}{\'a}ra, Ji{\v{r}}{\'\i} and O'Sullivan, Carol},
  journal={ACM Transactions on Graphics (TOG)},
  volume={27},
  number={4},
  pages={1--23},
  year={2008},
  publisher={ACM New York, NY, USA}
}

@article{lipman2008green,
  title={Green coordinates},
  author={Lipman, Yaron and Levin, David and Cohen-Or, Daniel},
  journal={ACM Transactions on Graphics (TOG)},
  volume={27},
  number={3},
  pages={1--10},
  year={2008},
  publisher={ACM New York, NY, USA}
}

@article{ju2005mean,
  title={Mean value coordinates for closed triangular meshes},
  author={Ju, Tao and Schaefer, Scott and Warren, Joe},
  journal={ACM Siggraph 2005 Papers},
  pages={561--566},
  year={2005}
}

@InProceedings{AnimSkelVolNet,
    title={Predicting Animation Skeletons for 3D Articulated Models via Volumetric Nets},
    author={Zhan Xu and Yang Zhou and Evangelos Kalogerakis and Karan Singh},
    booktitle={2019 International Conference on 3D Vision (3DV)},
    year={2019}
  }

@article{RigNet,
    title={RigNet: Neural Rigging for Articulated Characters},
    author={Zhan Xu and Yang Zhou and Evangelos Kalogerakis and Chris Landreth and Karan Singh},
    journal={ACM Trans. on Graphics},
    year={2020},
    volume={39}
  }

@inproceedings{Holden:inverse_rig:2015,
author = {Holden, Daniel and Saito, Jun and Komura, Taku},
title = {Learning an Inverse Rig Mapping for Character Animation},
year = {2015},
isbn = {9781450334969},
publisher = {Association for Computing Machinery},
address = {New York, NY, USA},
url = {https://doi.org/10.1145/2786784.2786788},
doi = {10.1145/2786784.2786788},
booktitle = {Proceedings of the 14th ACM SIGGRAPH / Eurographics Symposium on Computer Animation},
pages = {165–173},
numpages = {9},
location = {Los Angeles, California},
series = {SCA '15}
}

@article{li2021learning,
  author = {Li, Peizhuo and Aberman, Kfir and Hanocka, Rana and Liu, Libin and Sorkine-Hornung, Olga and Chen, Baoquan},
  title = {Learning Skeletal Articulations with Neural Blend Shapes},
  journal = {ACM Transactions on Graphics (TOG)},
  volume = {40},
  number = {4},
  pages = {1},
  year = {2021},
  publisher = {ACM}
}

@article{liu2025riganything,
    title={RigAnything: Template-Free Autoregressive Rigging for Diverse 3D Assets},
    author={Liu, Isabella and Xu, Zhan and Yifan, Wang and Tan, Hao and Xu, Zexiang and Wang, Xiaolong and Su, Hao and Shi, Zifan},
    journal={arXiv preprint arXiv:2502.09615},
    year={2025}
}

@inproceedings{wang2019neural,
  author = {Yifan Wang and Noam Aigerman and Vladimir G. Kim and Siddhartha Chaudhuri and Olga Sorkine-Hornung},
  title = {Neural Cages for Detail-Preserving {3D} Deformations},
  booktitle = {CVPR},
  year = {2020},
}

@INPROCEEDINGS{varol17_surreal,
  title     = {Learning from Synthetic Humans},
  author    = {Varol, G{\"u}l and Romero, Javier and Martin, Xavier and Mahmood, Naureen and Black, Michael J. and Laptev, Ivan and Schmid, Cordelia},
  booktitle = {CVPR},
  year      = {2017}
}

@inproceedings{SMAL:2017,
        title = {{3D} Menagerie: Modeling the {3D} Shape and Pose of Animals},
        author = {Zuffi, Silvia and Kanazawa, Angjoo and Jacobs, David and Black, Michael J.},
        booktitle = {IEEE Conf. on Computer Vision and Pattern Recognition (CVPR)},
        month = jul,
        year = {2017},
        month_numeric = {7}
      }

@inproceedings{STAR:2020,
      author = {Osman, Ahmed A A and Bolkart, Timo and Black, Michael J.},
      title = {{STAR}: A Sparse Trained Articulated Human Body Regressor},
      booktitle = {European Conference on Computer Vision (ECCV)},
      pages = {598--613},
      year = {2020},
      url = {https://star.is.tue.mpg.de}
}

@inproceedings{anguelov2005scape,
  author = {Anguelov, Dragomir and Srinivasan, Praveen and Koller, Daphne and Thrun, Sebastian and Rodgers, Jim and Davis, James},
  title = {{SCAPE}: Shape Completion and Animation of People},
  booktitle = {SIGGRAPH},
  year = {2005},
}

@inproceedings{Bogo:ECCV:2016,
        title = {Keep it {SMPL}: Automatic Estimation of {3D} Human Pose and Shape
        from a Single Image},
        author = {Bogo, Federica and Kanazawa, Angjoo and Lassner, Christoph and
        Gehler, Peter and Romero, Javier and Black, Michael J.},
        booktitle = {Computer Vision -- ECCV 2016},
        series = {Lecture Notes in Computer Science},
        publisher = {Springer International Publishing},
        month = oct,
        year = {2016}
        }

@inproceedings{Shen:2021,
title={High-order differentiable autoencoder for nonlinear model reduction},
author = {Siyuan Shen and Yin Yang and Tianjia Shao and He Wang and Chenfanfu Jiang and Lei Lan and Kun Zhou},
journal = {ACM Transactions on Graphics},
year = {2021}
}

@inproceedings{sun2020zernet,
  title={Zernet: Convolutional neural networks on arbitrary surfaces via zernike local tangent space estimation},
  author={Sun, Zhiyu and Rooke, Ethan and Charton, Jerome and He, Yusen and Lu, Jia and Baek, Stephen},
  booktitle={Computer Graphics Forum},
  volume={39},
  number={6},
  pages={204--216},
  year={2020},
  organization={Wiley Online Library}
}

@article{Bailey:2018:FDD,
  note = {Presented at SIGGRAPH 2018, Los Angeles},
  doi = {10.1145/3197517.3201300},
  title = {Fast and Deep Deformation Approximations},
  journal = {ACM Transactions on Graphics},
  author = {Stephen W. Bailey and Dave Otte and Paul Dilorenzo and James F. O'Brien},
  number = 4,
  month = aug,
  volume = 37,
  year = 2018,
  pages = {119:1--12},
  url = {http://graphics.berkeley.edu/papers/Bailey-FDD-2018-08/},
}

@article{bailey2020fast,
  title={Fast and deep facial deformations},
  author={Bailey, Stephen W and Omens, Dalton and Dilorenzo, Paul and O'Brien, James F},
  journal={ACM Transactions on Graphics (TOG)},
  volume={39},
  number={4},
  pages={94--1},
  year={2020},
  publisher={ACM New York, NY, USA}
}

@article{Romero:2021,
  author       = "Romero, Cristian and Casas, Dan and Perez, Jesus and Otaduy, Miguel A.",
  title        = "Learning Contact Corrections for Handle-Based Subspace Dynamics",
  journal      = "ACM Trans. on Graphics (Proc. of ACM SIGGRAPH)",
  number       = "4",
  volume       = "40",
  year         = "2021",
  url          = "http://gmrv.es/Publications/2021/RCPO21"
}

@inproceedings{Zheng:secondary_motion:2021,
  title={A Deep Emulator for Secondary Motion of 3D Characters},
  author={Zheng, Mianlun and Zhou, Yi and Ceylan, Duygu and Barbic, Jernej},
  booktitle={Proceedings of the IEEE/CVF Conference on Computer Vision and Pattern Recognition},
  pages={5932--5940},
  year={2021}
}

@inproceedings{yin2021_3DStyleNet,
    title = {3DStyleNet: Creating 3D Shapes with Geometric and Texture Style Variations}, 
    author = {Kangxue Yin and Jun Gao and Maria Shugrina and Sameh Khamis and Sanja Fidler},
              booktitle = {Proceedings of International Conference on Computer Vision (ICCV)},
              year = {2021}
}

@article{Thingi10K,
  title={Thingi10K: A Dataset of 10,000 3D-Printing Models},
  author={Zhou, Qingnan and Jacobson, Alec},
  journal={arXiv preprint arXiv:1605.04797},
  year={2016}
}

@inproceedings{heatKernelSignature,
  title={A concise and provably informative multi-scale signature based on heat diffusion},
  author={Sun, Jian and Ovsjanikov, Maks and Guibas, Leonidas},
  booktitle={Computer graphics forum},
  volume={28},
  number={5},
  pages={1383--1392},
  year={2009},
  organization={Wiley Online Library}
}

@article{Litany2017DeepFM,
  title={Deep Functional Maps: Structured Prediction for Dense Shape Correspondence},
  author={Or Litany and Tal Remez and Emanuele Rodol{\`a} and Alexander M. Bronstein and Michael M. Bronstein},
  journal={2017 IEEE International Conference on Computer Vision (ICCV)},
  year={2017},
  pages={5660-5668},
  url={https://api.semanticscholar.org/CorpusID:4215682}
}

@article{attaiki2023shape,
  title={Shape non-rigid kinematics (snk): A zero-shot method for non-rigid shape matching via unsupervised functional map regularized reconstruction},
  author={Attaiki, Souhaib and Ovsjanikov, Maks},
  journal={Advances in Neural Information Processing Systems},
  volume={36},
  pages={70012--70032},
  year={2023}
}

@article{reviewnonrigid,
    title   = {Non-Rigid 3D Shape Correspondences: From Foundations to Open Challenges and Opportunities},
    author  = {Zhuravlev, Aleksei and Bastian, Lennart and Cao, Dongliang and El Amrani, Nafie and Roetzer, Paul and Ehm, Viktoria and Marin, Riccardo and Nishizawa, Hiroki and Morishima, Shigeo and Theobalt, Christian and Navab, Nassir and Cremers, Daniel and Bernard, Florian and L{\"a}hner, Zorah and Golyanik, Vladislav},
    journal = {Computer Graphics Forum},
    year    = {2026},
    doi     = {10.1111/cgf.70397}
    }

@INPROCEEDINGS{RL01,
  author={Rusinkiewicz, S. and Levoy, M.},
  booktitle={Proceedings Third International Conference on 3-D Digital Imaging and Modeling}, 
  title={Efficient variants of the ICP algorithm}, 
  year={2001},
  volume={},
  number={},
  pages={145-152},
  doi={10.1109/IM.2001.924423}}

@INPROCEEDINGS{ZSN03,
  author={Zinsser, T. and Schmidt, J. and Niemann, H.},
  booktitle={Proceedings 2003 International Conference on Image Processing (Cat. No.03CH37429)}, 
  title={A refined ICP algorithm for robust 3-D correspondence estimation}, 
  year={2003},
  volume={2},
  number={},
  pages={II-695},
  doi={10.1109/ICIP.2003.1246775}}

@inproceedings{MAKM21,
author = {Morreale, Luca and Aigerman, Noam and Kim, Vladimir and Mitra, Niloy},
year = {2021},
month = {11},
pages = {},
title = {Neural Surface Maps},
doi = {10.1109/CVPR46437.2021.00461}
}

@article{neural_semantic_surface_maps,
author = {Morreale, Luca and Aigerman, Noam and Kim, Vladimir and Mitra, Niloy},
year = {2024},
month = {04},
pages = {},
title = {Neural Semantic Surface Maps},
volume = {43},
journal = {Computer Graphics Forum},
doi = {10.1111/cgf.15005}
}

@article{sadh04,
author = {Schreiner, John and Asirvatham, Arul and Praun, Emil and Hoppe, Hugues},
title = {Inter-surface mapping},
year = {2004},
issue_date = {August 2004},
publisher = {Association for Computing Machinery},
address = {New York, NY, USA},
volume = {23},
number = {3},
issn = {0730-0301},
url = {https://doi.org/10.1145/1015706.1015812},
doi = {10.1145/1015706.1015812},
journal = {ACM Trans. Graph.},
month = aug,
pages = {870–877},
numpages = {8}
}

@article{apl14,
author = {Aigerman, Noam and Poranne, Roi and Lipman, Yaron},
title = {Lifted bijections for low distortion surface mappings},
year = {2014},
issue_date = {July 2014},
publisher = {Association for Computing Machinery},
address = {New York, NY, USA},
volume = {33},
number = {4},
issn = {0730-0301},
url = {https://doi.org/10.1145/2601097.2601158},
doi = {10.1145/2601097.2601158},
journal = {ACM Trans. Graph.},
month = jul,
articleno = {69},
numpages = {12}
}

@article{SBCK19,
author = {Schmidt, Patrick and Born, Janis and Campen, Marcel and Kobbelt, Leif},
title = {Distortion-minimizing injective maps between surfaces},
year = {2019},
issue_date = {December 2019},
publisher = {Association for Computing Machinery},
address = {New York, NY, USA},
volume = {38},
number = {6},
issn = {0730-0301},
url = {https://doi.org/10.1145/3355089.3356519},
doi = {10.1145/3355089.3356519},
journal = {ACM Trans. Graph.},
month = nov,
articleno = {156},
numpages = {15}
}

@inproceedings{AAWEOW,
author = {Abdelreheem, Ahmed and Eldesokey, Abdelrahman and Ovsjanikov, Maks and Wonka, Peter},
title = {Zero-Shot 3D Shape Correspondence},
year = {2023},
isbn = {9798400703157},
publisher = {Association for Computing Machinery},
address = {New York, NY, USA},
url = {https://doi.org/10.1145/3610548.3618228},
doi = {10.1145/3610548.3618228},
booktitle = {SIGGRAPH Asia 2023 Conference Papers},
articleno = {59},
numpages = {11},
location = {Sydney, NSW, Australia},
series = {SA '23}
}

@article{ren2018continuous,
  title={Continuous and orientation-preserving correspondences via functional maps},
  author={Ren, Jing and Poulenard, Adrien and Wonka, Peter and Ovsjanikov, Maks},
  journal={ACM Transactions on Graphics (ToG)},
  volume={37},
  number={6},
  pages={1--16},
  year={2018},
  publisher={ACM New York, NY, USA}
}

@inproceedings{panine2022non,
  title={Non-Isometric Shape Matching via Functional Maps on Landmark-Adapted Bases},
  author={Panine, Mikhail and Kirgo, Maxime and Ovsjanikov, Maks},
  booktitle={Computer graphics forum},
  volume={41},
  number={6},
  pages={394--417},
  year={2022},
  organization={Wiley Online Library}
}

@article{zoomout,
author = {Melzi, Simone and Ren, Jing and Rodol\`{a}, Emanuele and Sharma, Abhishek and Wonka, Peter and Ovsjanikov, Maks},
title = {ZoomOut: spectral upsampling for efficient shape correspondence},
year = {2019},
issue_date = {December 2019},
publisher = {Association for Computing Machinery},
address = {New York, NY, USA},
volume = {38},
number = {6},
issn = {0730-0301},
url = {https://doi.org/10.1145/3355089.3356524},
doi = {10.1145/3355089.3356524},
journal = {ACM Trans. Graph.},
month = nov,
articleno = {155},
numpages = {14}
}

@unknown{smoothshells,
author = {Eisenberger, Marvin and Lähner, Zorah and Cremers, Daniel},
year = {2019},
month = {05},
pages = {},
title = {Smooth Shells: Multi-Scale Shape Registration with Functional Maps},
doi = {10.48550/arXiv.1905.12512}
}

@inproceedings{3dcoded,
  			title={Learning elementary structures for 3D shape generation and matching},
  			author={Deprelle, Theo and Groueix, Thibault and Fisher, Matthew and Kim, Vladimir G and Russell, Bryan C and Aubry, Mathieu},
  			booktitle={Neurips},
  			year={2019}
}

@inproceedings{simplified_fmaps,
author = {Magnet, Robin and Ovsjanikov, Maks},
year = {2024},
month = {06},
pages = {4041-4050},
title = {Memory-Scalable and Simplified Functional Map Learning},
doi = {10.1109/CVPR52733.2024.00387}
}

@misc{pointcept,
    title={Pointcept: A Codebase for Point Cloud Perception Research},
    author={Pointcept Contributors},
    howpublished = {\url{https://github.com/Pointcept/Pointcept}},
    year={2023}
}
